\newtheorem{thm}{Theorem}[section]
\newtheorem{lem}[thm]{Lemma}
\newtheorem{prop}[thm]{Proposition}
\newtheorem{conj}{Conjecture}
\theoremstyle{definition}
\newtheorem{defn}[thm]{Definition}
\newtheorem{rmk}{Remark}
\newtheorem*{notation}{Notation}
\title{The Weight Distribution of Quasi-Quadratic Residue Codes}
\author[Nigel Boston and Jing Hao]{}
 \email{nboston@wisc.edu}
 \email{jing.hao@wisc.edu}
\subjclass{Primary: 94B15, 94B60; Secondary: 11G20.}
 \keywords{algebraic coding theory, weight enumerator, automorphism group, shadow, moment, quadratic residue code, hyperelliptic curve}
\begin{document}
\maketitle
\centerline{\scshape Nigel Boston}
\medskip
{\footnotesize
 \centerline{Department of Mathematics,}
   \centerline{Department of Electric and Computer Engineering,}
   \centerline{University of Wisconsin-Madison}
   \centerline{WI 53706, United States}
} % Do not forget to end the {\footnotesize by the sign }

\medskip

\centerline{\scshape Jing Hao}
\medskip
{\footnotesize
 \centerline{Department of Mathematics,}
   \centerline{University of Wisconsin-Madison}
   \centerline{WI 53706, United States}
}

\bigskip

\section{Introduction}
\label{sec:introduction}

Quasi-quadratic residue codes (QQR codes) are a family of binary linear codes.  They were first introduced by Bazzi and Mitter\cite{Bazzi2003} as a quasi-cyclic code.  Their work set foundations for the study of QQR codes. They discovered the relation between weights of a QQR code and number of points on hyperelliptic curves.  Joyner\cite{Joyner2008} built upon this relation, and revealed the link between the QQR code and the famous Goppa's Conjecture in coding theory.

We are interested in these codes mainly for two reasons:  Firstly, they have close relations with hyperelliptic curves and Goppa's Conjecture, and serve as a strong tool in studying those objects.  Secondly, they are very good codes.  Computational results show they have large minimum distances when $p\equiv 3 \pmod 8$.  

QQR codes are similar to quadratic residue codes.  They are asymptotically rate half codes (exactly rate half when $p\equiv 3 \pmod 4$). Also, as we will show, $PSL_2(p)$ acts as automorphisms of the extended QQR codes in a similar way as of the extended quadratic residue codes.  Furthermore, when $p \equiv 7 \pmod 8$, we will show that the QQR code is equivalent to the even subcode of the corresponding quadratic residue code direct sum with itself, and therefore their weight enumerators have close relations.

We will utilize the result that $PSL_2(p)$ acts on these codes to prove a new discovery about their weight polynomials, i.e. they are divisible by $(x^2 + y^2)^{d-1}$, where $d$ is the corresponding minimum distance.  The proof uses shadows of codes, a powerful tool to study weight polynomials.  We also apply this idea to quadratic residue codes, and prove that their weight polynomials are divisible by $(x + y)^d$, with $d$ being the minimum distance.

These results impose strong conditions on the weight polynomials of quadratic residue codes and QQR codes.  Combining the divisibility result and Gleason's Theorem, we can derive an efficient algorithm to compute the weight polynomials of QQR codes.  We also use these results to correct the existing computational results for the weight polynomials of quadratic residue codes that were originally posted on \cite{Tjhai}.

We also answer in the negative the question posted by Joyner\cite{Joyner2008} asking whether QQR codes satisfy Riemann hypothesis.

On the other hand, the weight of their codewords can be expressed in terms of the number of points on corresponding hyperelliptic curves over finite fields.  As it is usually easier to study linear codes, this provides a way of studying point distributions of hyperelliptic curves.  We will implement this idea to prove a variant of a result of Larsen\cite{Larsen2008} on asymptotic normal distribution of numbers of points on hyperelliptic curves.

\section{Construction and properties}
\label{sec:construction_and_properties}

We first give constructions and introduce properties of QQR codes.  We also include an introduction to quadratic residue codes, as these will be used in later sections.

{\bf Throughout the entire paper, if not stated otherwise, $p$ is a prime satisfying $\bf{p\equiv 3 \pmod 4}$.}

Let $S$ be a subset of $ \mathbb{F}_p$.  We can assign to $S$ a polynomial in $ \mathbb{F}_2[x]/(x^p - 1)$ by

\[
	r_S := \sum_{ a \in S} x^a
\]

Let $Q$ be the set of quadratic residues in $ \mathbb{F}_p$ and $N$ be the quadratic non-residues in $ \mathbb{F}_p$.  

\begin{defn}[Quadratic residue code]
	Let $p \equiv \pm 1 \pmod 8$, 
	\begin{align*}
		Q:= \{ (r_Qr_S)|S \subseteq \mathbb{F}_p\} \\
		N:= \{ (r_Nr_S)|S \subseteq \mathbb{F}_p\}
	\end{align*}
	are the \emph{quadratic residue codes} associated with $p$.  

	$Q$ is equivalent to $N$ since $r_N$ can be obtained from $r_Q$ via the permutation
	\[
		y \mapsto \rho y, y=0, \cdots, p-1
	\]
	where $\rho$ is a primitive element of $ \mathbb{F}_p$.
\end{defn}

\begin{notation}
	Note that with abuse of notation, we use $Q$ to denote both the quadratic residues and the code generated by $r_Q$.  It should be clear in the context what we are referring to.  Similarly for $N$.

	The generating matrices for $Q$ and $N$ are circulant matrices, and we denote them as $G_Q$ and $G_N$ respectively.
\end{notation}

\begin{defn}[Quasi-quadratic residue code] 
	For an odd prime $p$,
	\begin{align*}
		C:=\{(r_Qr_S,r_Nr_S)|S \subset \mathbb{F}_p\}		\end{align*}
	is called the \emph{quasi-quadratic residue code} associated with $p$.  $(r_Qr_S,r_Nr_S)$ is identified with an element in $ \mathbb{F}_2^{2p}$ in the usual way.
\end{defn}

\begin{notation}
	By ``the corresponding quadratic residue code'' to a QQR code, we mean the quadratic residue code associated with the same $p$.  Similarly for ``the corresponding QQR code'' to a quadratic residue code, we mean the QQR code associated with the same $p$.
\end{notation}

If we write 
\begin{align*}
	r_Q &= \sum_0^{p-1} a_i x^i \\
	r_N &= \sum_0^{p-1} b_i x^i
\end{align*}
then the generating matrix for the QQR code can be written as 

\[
	G:=
	\left[
		\begin{array}{c c c c c | c c c c c}
			a_0 & a_1 & a_2 & \cdots & a_{p-1} & b_0 & b_1 & b_2 & \cdots & b_{p-1} \\
			a_{p-1} & a_0 & a_1 & \cdots & a_{p-2} & b_{p-1} & b_0 & b_1 & \cdots & b_{p-2} \\
			\vdots & & \ddots & & \vdots & \vdots & & \ddots & & \vdots  \\
			a_2 & a_3 & a_4 & \cdots & a_1 & b_2 & b_3 & b_4 & \cdots & b_1 \\ 
			a_1 & a_2 & a_3 & \cdots & a_0 & b_1 & b_2 & b_3 & \cdots & b_0
		\end{array}
	\right]
\]

\bigskip

This generating matrix is double circulant.  Clearly $G=[G_Q |G_N]$.

We list some known properties of QQR codes below.  Interested readers can check \cite{Joyner2008} for more information.

QQR codes are even weight codes.  They have length $2p$ and dimension $p$.  QQR codes are self-dual.  

\begin{prop}[Gaborit\cite{Gaborit2001}]
	When $p \equiv 3 \pmod 8$, $G_Q^2 = G_Q^{-1} =G_N$.  Equivalently, $r_Q^2 = r_Q^{-1} = r_N$.
	\label{prop6}
\end{prop}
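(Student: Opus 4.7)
The plan is to reduce both matrix identities to identities in the ring $R := \mathbb{F}_2[x]/(x^p-1)$. The map sending $\sum_i a_i x^i$ to the circulant matrix with first row $(a_0, a_1, \ldots, a_{p-1})$ is a ring isomorphism from $R$ onto the algebra of $p \times p$ circulant matrices over $\mathbb{F}_2$, and under it $G_Q$ corresponds to $r_Q$ and $G_N$ to $r_N$. Hence it suffices to prove the two polynomial identities $r_Q^2 = r_N$ and $r_Q r_N = 1$ in $R$.

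For $r_Q^2 = r_N$, Frobenius in characteristic two gives $r_Q^2 = \sum_{a \in Q} x^{2a}$ in $R$. By the second supplement to quadratic reciprocity, $2$ is a non-residue modulo $p$ exactly when $p \equiv \pm 3 \pmod 8$, so under our hypothesis $2 \in N$. Therefore multiplication by $2$ is a bijection $Q \to N$, and the desired identity drops out.

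For $r_Q r_N = 1$, introduce $J := 1 + x + \cdots + x^{p-1} \in R$ and record two preliminary identities. First, since $Q$ and $N$ partition $\mathbb{F}_p^\ast$, we have $r_Q + r_N = J + 1$. Second, in the product $r_Q \cdot J$ each monomial $x^k$ appears with multiplicity $|Q| = (p-1)/2$, because for every $a \in Q$ there is exactly one $i \in \{0,\ldots,p-1\}$ with $a + i \equiv k \pmod p$; since $p \equiv 3 \pmod 4$ this multiplicity is odd, giving $r_Q \cdot J = J$ in $R$. Multiplying the first identity by $r_Q$ and applying the second, we obtain $r_Q^2 + r_Q r_N = J + r_Q$; substituting $r_Q^2 = r_N$ from the previous paragraph and then using $r_Q + r_N = J + 1$ yields $r_Q r_N = J + (J+1) = 1$.

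The only delicate step is the computation of $r_Q r_N$. A direct coefficient-by-coefficient enumeration would pull in character-sum identities of the Legendre symbol; the trick above bypasses that by exploiting the two structural identities $r_Q + r_N = J + 1$ and $r_Q \cdot J = J$, collapsing the problem to a two-line manipulation. Transporting the polynomial identities back through the circulant-matrix isomorphism recovers $G_Q^2 = G_N$ and $G_Q G_N = I$, that is $G_Q^{-1} = G_N$, as claimed.
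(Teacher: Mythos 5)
Your proof is correct, but it takes a genuinely different route from the paper. The paper offers no argument at all beyond the remark that the proposition ``is a result of Perron's Theorem,'' i.e.\ the intended derivation computes the coefficients of $r_Q^2$ and $r_Qr_N$ by counting how many quadratic residues and non-residues occur in the translated sets $\{a+r : r\in Q\}$ and $\{a+s : s\in N\}$, which is exactly what Perron's theorem (stated later in the paper as Theorem \ref{thm3}) supplies. You bypass Perron entirely: $r_Q^2=r_N$ falls out of the Frobenius endomorphism together with the second supplement to quadratic reciprocity ($2\in N$ precisely because $p\equiv 3\pmod 8$), and $r_Qr_N=1$ follows from the two structural identities $r_Q+r_N=J+1$ and $r_QJ=|Q|\,J=J$ (using that $|Q|=(p-1)/2$ is odd when $p\equiv 3\pmod 4$), plus a two-line manipulation. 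All the individual steps check out --- the circulant-algebra isomorphism with $\mathbb{F}_2[x]/(x^p-1)$ is standard and does carry $G_Q\mapsto r_Q$, $G_N\mapsto r_N$, $I\mapsto 1$ --- and both congruence hypotheses are genuinely used where you invoke them. What your approach buys is self-containedness and brevity: it needs no character-sum or distribution-of-residues input. What the Perron route buys is uniformity with the rest of the paper, since Theorem \ref{thm3} is needed anyway for the automorphism-group computation in Theorem \ref{thm2}, so the authors get this proposition essentially for free from machinery they must develop regardless.
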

This proposition is a result of Perron's Theorem\cite{Macwilliams1977} on quadratic residues.

\begin{prop}[Bazzi, Mitter\cite{Bazzi2006a}]
	When $p\equiv 3 \pmod 8$, the QQR code also has a standard double circulant form, i.e. its generating matrix can be written as $[I|G_Q]$.
\end{prop}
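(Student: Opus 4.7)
The plan is to use Proposition \ref{prop6} (Gaborit) directly: it supplies an explicit inverse for $G_Q$ over $\mathbb{F}_2$, and once we have that, passing from $[G_Q \mid G_N]$ to $[I \mid G_Q]$ is just a left-multiplication by an invertible matrix, which preserves the row space and hence the code.

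More concretely, first I would record what Proposition \ref{prop6} gives for the circulant matrices. Writing everything in $\mathbb{F}_2[x]/(x^p-1)$, the identities $r_Q^2 = r_Q^{-1} = r_N$ translate, via the ring isomorphism between $\mathbb{F}_2[x]/(x^p-1)$ and circulant matrices, to $G_Q^2 = G_Q^{-1} = G_N$. In particular $G_Q$ is invertible over $\mathbb{F}_2$, and multiplying $G_Q^2 = G_Q^{-1}$ through by $G_Q$ gives $G_Q^3 = I$.

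Second, I would apply the left-multiplication. Since $G_Q$ is invertible, so is $G_N = G_Q^{-1}$; left-multiplying the QQR generating matrix $G = [G_Q \mid G_N]$ by $G_N$ gives
\begin{equation*}
G_N \cdot G = \bigl[\, G_N G_Q \,\bigm|\, G_N^2 \,\bigr] = \bigl[\, I \,\bigm|\, G_N^2 \,\bigr].
\end{equation*}
Then I would simplify $G_N^2$: using $G_N = G_Q^2$ and $G_Q^3 = I$ we get $G_N^2 = G_Q^4 = G_Q$. Hence $G_N\cdot G = [I \mid G_Q]$, which is a generating matrix for the same code because $G_N$ is a nonsingular $p\times p$ matrix over $\mathbb{F}_2$ and therefore only performs elementary row operations on $G$.

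There is essentially no obstacle once Proposition \ref{prop6} is in hand: the whole argument is one row-reduction step, and the only thing to be careful about is checking that the column block on the right really simplifies back to $G_Q$ (which is where the relation $G_Q^3 = I$, and hence the hypothesis $p\equiv 3\pmod 8$, is used). The genuinely nontrivial input — the invertibility of $r_Q$ in $\mathbb{F}_2[x]/(x^p-1)$ together with the specific identification of its inverse as $r_N$ — has already been done by Gaborit via Perron's theorem on quadratic residues, so I would not reprove it here.
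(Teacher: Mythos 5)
Your proposal is correct and is essentially the paper's own argument run in the opposite direction: the paper left-multiplies $[I\mid G_Q]$ by $G_Q$ to obtain $[G_Q\mid G_Q^2]=[G_Q\mid G_N]$, whereas you left-multiply $[G_Q\mid G_N]$ by $G_N$ and simplify $G_N^2=G_Q$ via $G_Q^3=I$. Both rest on the same input (Proposition \ref{prop6}) and the same observation that an invertible left factor preserves the row space, so no further comment is needed.
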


\begin{proof}
	By Proposition \ref{prop6}, $G_Q$ is invertible and 
	\[
		G_Q[I|G_Q] = [G_Q|G_N]
	\]
	Therefore $[I|G_Q]$ is also a generating matrix for the QQR code.
\end{proof}

When Bazzi and Mitter first introduced QQR codes, their generating matrices were given in the form $[I|G_Q]$. They gave a proof that these codes also have generating matrices in the form $[G_Q|G_N]$ when $p \equiv 3 \pmod 8$.  Most references on double circulant codes study codes whose generating matrices are in the form $[I|A]$, where $A$ is cyclic, such as Karlin's original paper on double circulant codes\cite{Karlin1969}.  

Joyner defined QQR codes using $[G_Q|G_N]$ as their generating matrices, and we will also use this version in our paper.  Note that when $p \equiv 7 \pmod 8$, the code generated by $[I|G_Q]$ is not equivalent to the code generated by $[G_Q|G_N]$, and therefore these two definitions are not the same.

From the definitions of quadratic residue code and QQR code, it is obvious that when $p \equiv 7 \pmod 8$, by only taking the first $p$ bits of a QQR code, we can obtain the corresponding quadratic residue code.  Moreover, we can show a stronger connection between quadratic residue codes and QQR codes in this case.

\begin{notation}
We denote $d(C)$ as the function that outputs the minimum distance of a code $C$.
\end{notation}

\begin{prop}
	When $p \equiv 7 \pmod 8$, let $C$ be the QQR code associated with $p$, and let $Q$ and $N$ be the corresponding quadratic residue codes.  Then $C$ is the even subcode of $Q \oplus N$.

	Moreover, their minimum distances satisfy
	\[
		d(C) = d(Q) + 1
	\]
	\label{prop9}
\end{prop}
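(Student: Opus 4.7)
The first part follows by containment plus a dimension count. By construction each codeword $(r_Q r_S, r_N r_S)$ of $C$ lies in $Q \oplus N$, and since $C$ consists only of even-weight vectors (as noted in the paper) we have $C \subseteq (Q \oplus N)^{\mathrm{ev}}$. For $p \equiv 7 \pmod 8$, $2 \in Q$, hence $r_Q$ (and similarly $r_N$) is idempotent in $\mathbb{F}_2[x]/(x^p - 1)$; a standard computation in the CRT decomposition of this ring then gives $\dim Q = \dim N = (p+1)/2$, so $\dim(Q \oplus N) = p + 1$. Because $r_Q$ itself has the odd weight $(p-1)/2 \equiv 3 \pmod 4$, the overall parity functional is nonzero on $Q \oplus N$, and $\dim(Q \oplus N)^{\mathrm{ev}} = p = \dim C$. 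Hence the inclusion is an equality.

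For the minimum distance, any nonzero $(q, n) \in C$ has $q \in Q$, $n \in N$ with $\operatorname{wt}(q) + \operatorname{wt}(n)$ even. If either $q$ or $n$ is zero the weight is at least $d_{\mathrm{ev}}(Q)$, the minimum nonzero even weight in $Q$ (equal to that of $N$ by the equivalence of the two codes); otherwise both are nonzero of odd weight and the weight is at least $2 d(Q)$. Consequently
\[
d(C) \;=\; \min\!\bigl( d_{\mathrm{ev}}(Q),\ 2 d(Q) \bigr),
\]
and since $d(Q) + 1 \le 2 d(Q)$, it suffices to prove $d_{\mathrm{ev}}(Q) = d(Q) + 1$.

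For this last step I would invoke the action of $PSL_2(p)$ on the coordinates of the extended QR code $\tilde Q$, which is transitive (this group's role in the automorphism group of $\tilde Q$ is promised in the introduction). A minimum weight codeword $\tilde c$ of $\tilde Q$ has weight $d(\tilde Q) < p + 1$, so $\tilde c$ carries both a $0$ and a $1$ among its coordinates; by transitivity we can find group elements moving either one to the extended parity coordinate. This yields a min weight codeword of $\tilde Q$ with parity bit $0$, which truncates to an even weight codeword of $Q$ of weight $d(\tilde Q)$, and another with parity bit $1$, truncating to an odd weight codeword of $Q$ of weight $d(\tilde Q) - 1$. Combined with the matching lower bounds $d_{\mathrm{ev}}(Q) \ge d(\tilde Q)$ and $d_{\mathrm{odd}}(Q) \ge d(\tilde Q) - 1$ (obtained by appending a $0$ or $1$ parity bit to any codeword of $Q$), we conclude $d(Q) = d_{\mathrm{odd}}(Q) = d(\tilde Q) - 1$ is odd and $d_{\mathrm{ev}}(Q) = d(\tilde Q) = d(Q) + 1$, finishing the proof.

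The main obstacle is this last step. The naive approach via $q \mapsto q + j$ (where $j$ is the all-ones vector, which belongs to $Q$) produces an even weight codeword of weight only $p - d(Q)$, typically much larger than $d(Q) + 1$; the $PSL_2(p)$ symmetry is what pulls a minimum weight codeword off the parity coordinate and yields an even weight codeword of $Q$ of weight exactly $d(Q) + 1$.
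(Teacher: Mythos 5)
Your argument is correct. The first half (containment of $C$ in the even subcode of $Q\oplus N$ plus a dimension count, using that $Q\oplus N$ contains an odd-weight vector so its even subcode has codimension one) is essentially identical to the paper's proof; you certify the odd-weight vector via $r_Q$ itself where the paper uses the all-ones word, which is immaterial. For the minimum distance, the paper simply quotes two known facts about quadratic residue codes with $p\equiv -1\pmod 8$ --- that $d(Q)$ is odd (its Proposition on QR code properties) and that $Q$ contains a codeword of weight $d(Q)+1$ (citing Blahut) --- and your reduction $d(C)=\min\bigl(d_{\mathrm{ev}}(Q),\,2d(Q)\bigr)$ together with the Gleason--Prange transitivity argument is precisely the standard proof of those cited facts, so you have unpacked the citation rather than taken a different route; this costs you a dependence on $PSL_2(p)\le\operatorname{Aut}(\tilde Q)$, which is classical and is anyway invoked elsewhere in the paper. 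One small slip: when both components $q$ and $n$ are nonzero they need not have odd weight (the evenness of $\operatorname{wt}(q)+\operatorname{wt}(n)$ allows both to be even), but the bound $\operatorname{wt}(q)+\operatorname{wt}(n)\ge 2d(Q)$ holds in either case, so nothing breaks.
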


\begin{proof}
	Since $C = \{(r_Qr_S,r_Nr_S)|S \subseteq \mathbb{F}_p\}$, $Q \oplus N = \{ (r_Q r_{S_1}, r_N r_{S_2})|S_1,S_2 \subseteq \mathbb{F}_p\}$, we have $C \subseteq Q \oplus N$.

	Also $C$ is even, and therefore $C$ is a subcode of the even subcode of $Q \oplus N$.  $Q$ contains the all 1 codeword, and hence is not even.  Therefore $Q \oplus N$ is also not even, and its even subcode is of codimension 1, which is $\dim(Q) + \dim(N) - 1 = \frac{p+1}{2} + \frac{p+1}{2} - 1 = p$.

	Since $\dim(C) = p$, $C$ is equal to the even subcode of $Q \oplus N$.	

	For the last statement, as we will show in Proposition \ref{prop7}, $d(Q)$ is odd. Also, there exists a codeword $c'$ in $C$ with weight $d+1$\cite{Blahut2008}.  Therefore $d(C)\ge d+1$.  On the other hand, since $d+1$ is even, $c'\oplus 0$ is in the even subcode of $Q \oplus N$.  So $d(C) = d(Q) + 1$.
\end{proof}

\subsection{Hyperelliptic Curves and Goppa's Conjecture}
\label{sub:relation_with_hyperelliptic_curves_and_goppa_s_conjecture}

For a code $C[n,k,d]$, $R:=\frac{k}{n}$ is the information rate.  $\delta:=\frac{d}{n}$ is the relative minimum distance.  $\delta$ indicates the error-correcting ability of a code.  Ideally we want $R$ and $\delta$ both large, but Manin\cite{Manin1982} proved that for a fixed field $\mathbb{F}_q$, there exists a function $ \alpha_q(\delta)$ such that for a given $\delta$, there are infinitely many linear codes with rate approaching $R$ only for rates below $ \alpha_q(\delta)$.

Gilbert\cite{Gilbert1952} and Varshamov\cite{Varshamov1957} showed $ \alpha_q(\delta) \geq 1 - x \log_q (q-1) + x \log_q(x) + (1-x) \log_q(1-x)$.  When $q=2$, this is believed to be an equality by many people, known as the following conjecture.

\begin{conj}[Goppa's Conjecture]
	The Gilbert-Varshamov bound is tight in the binary case.
\end{conj}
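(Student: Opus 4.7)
The plan is to establish $\alpha_2(\delta) = 1 - H_2(\delta)$ on $(0,1/2)$, where $H_2$ is the binary entropy function, by matching an achievability construction with a converse upper bound at every rate. Since the present paper develops QQR codes --- a distinguished rate-$1/2$ family with a large $PSL_2(p)$ action and tight algebraic structure --- it is natural to attempt both halves inside this framework, at least at the distinguished rate $1/2$, and then to extend by rate concatenation.

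For the achievability half, the Gilbert--Varshamov existence theorem already gives $\alpha_2(\delta) \geq 1 - H_2(\delta)$, so no new construction is strictly required. To bring the paper's tools to bear, I would first combine the divisibility property $(x^2+y^2)^{d-1} \mid W_C(x,y)$ (to be proved in later sections) with Gleason's theorem to pin down the asymptotic shape of the weight distribution of the QQR family, and show that the normalized minimum distance tends to $H_2^{-1}(1/2)$ as $p\to\infty$ along primes $p\equiv 3\pmod 8$. This would place an \emph{explicit} rate-$1/2$ family on the GV curve and single out QQR codes as a putative extremal sequence that any valid converse argument must respect.

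For the converse --- which is the crux --- I would start from the Delsarte linear-programming bound in its MRRW specialization
\[
\alpha_2(\delta) \leq H_2\!\left(\tfrac12 - \sqrt{\delta(1-\delta)}\right),
\]
strictly above $1-H_2(\delta)$ on $(0,1/2)$. The proposed strategy is to impose additional moment identities on the Krawtchouk distance-distribution variables inherited from large automorphism groups of putative extremal codes (with $PSL_2(p)$ acting on QQR codes as the guiding example), and then show that these extra constraints collapse the LP maximum down to the GV curve. Concretely, I would combine the moment method used later in the paper for asymptotic point-count statistics on hyperelliptic curves with the shadow-based polynomial constraints, to derive forbidden regions in the distance-distribution simplex that intersect the LP-feasible polytope exactly on the GV curve.

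The main obstacle, and the reason this conjecture has stood open for over fifty years, is exactly this converse. No technique has improved MRRW in the binary regime; even over large alphabets, where AG codes exceed GV, no matching upper bound is known. Linear-programming methods appear structurally unable to separate GV from MRRW, and there is no known mechanism by which automorphism-group data for a specific family sharpens the LP maximum over \emph{all} binary codes at a given rate. Honestly assessed, the QQR framework here can plausibly contribute to the achievability side at rate $1/2$, but the matching upper bound required by Goppa's Conjecture lies beyond the methods of this paper and of currently available technology; this is where any serious attack on the conjecture will have to concentrate.
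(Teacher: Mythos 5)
The statement you are trying to prove is labeled a \emph{conjecture} in the paper, and the paper offers no proof of it --- nor could it: Goppa's Conjecture is a famous open problem, and the paper's role for it is purely contextual. Indeed, the paper's own material points in the \emph{opposite} direction from a proof. Joyner's theorem quoted in Section 2.1 states that if $|X_S(\mathbb{F}_p)| \le 1.77\,p$ holds for all $S$ for infinitely many primes $p \equiv 1 \pmod 4$, then Goppa's Conjecture is \emph{false}; and the computational results in Table 1 show QQR codes (rate-$\tfrac12$ codes) with $\delta$ well above $0.11 \approx H_2^{-1}(1/2)$ for every prime computed, which the authors explicitly describe as giving the conjecture ``a serious challenge.'' So the QQR framework is being marshaled as potential evidence \emph{against} tightness of the GV bound at rate $\tfrac12$, not as a route to proving it.

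Your proposal has two concrete gaps beyond the one you candidly acknowledge. First, the converse: as you say yourself, nothing in the LP/MRRW machinery, and nothing about automorphism groups of one specific family, constrains the maximum over \emph{all} binary codes at a given rate; there is no mechanism by which $PSL_2(p)$ acting on QQR codes, or the shadow divisibility $(x^2+y^2)^{d-1} \mid A_C$, feeds into the Delsarte LP for arbitrary codes. This half of your argument is a research program, not a proof, and it is the entire content of the conjecture. Second, even your achievability refinement is unsupported: the divisibility theorem and Gleason's theorem constrain the \emph{shape} of the weight enumerator but do not determine the asymptotic minimum distance of the QQR family, and the only proven lower bound in the paper (Helleseth--Voloch, $d \ge 2(p+\sqrt{p})/(\sqrt{p}+3) \approx 2\sqrt{p}$) gives $\delta \to 0$, which says nothing about the family approaching, let alone sitting on, the GV curve. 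There is no proof to compare yours against; the correct disposition of this statement is that it remains open, and the paper treats it as such.
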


QQR codes play an important role in the study of Goppa's Conjecture because of the following explicit relation with hyperelliptic curves.

\begin{notation}
	We denote by $wt(c)$ the function that outputs the weight of a codeword $c$.
\end{notation}

\begin{prop}[Joyner\cite{Joyner2008}]
	Let $C$ be a QQR code associated with $p$, and $c=(r_Qr_S,r_Nr_S) \in C$, where $S \subseteq \mathbb{F}_p$.

Define $f_S(x) := \prod_{a \in S} (x-a)$.  Let $X_S( \mathbb{F}_p)$ be the set of points on the hyperelliptic curve $y^2 = f_S(x)$ over $ \mathbb{F}_p$.	

	\begin{enumerate}
		\item If $|S|$ is even 
			\[
				wt(c) = 2p - |X_S(\mathbb{F}_p)|
			\]
		\item If $|S|$ is odd and $p \equiv 1 \pmod 4$
			\[
				wt(c) = 2p - |X_{S^c}(\mathbb{F}_p)|
			\]
		\item If $|S|$ is odd and $p \equiv 3 \pmod 4$
			\[
				wt(c) = |X_{S^c}(\mathbb{F}_p)|
			\]
	\end{enumerate}
	\label{prop-Joyner}
\end{prop}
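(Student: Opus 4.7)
The plan is to rewrite $wt(r_Qr_S)$ and $wt(r_Nr_S)$ as counts of positions $k\in\mathbb{F}_p$ where a certain intersection has odd size, and then to convert those parities into Legendre symbols that count $\mathbb{F}_p$-points on $y^2=f_S(x)$. The bridge between the code and the curve is that the coefficient of $x^k$ in $r_Qr_S$ is the parity of $|(k-S)\cap Q|$, while the Legendre symbol $\chi(f_S(k))$ records the parity of $|(k-S)\cap N|$ up to sign.

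Concretely, I first expand the products in $\mathbb{F}_2[x]/(x^p-1)$ and set $S_Q(k):=|(k-S)\cap Q|$, $S_N(k):=|(k-S)\cap N|$, so that
\[
    wt(c)=|\{k\in\mathbb{F}_p : S_Q(k)\text{ odd}\}|+|\{k\in\mathbb{F}_p : S_N(k)\text{ odd}\}|,
\]
while $S_Q(k)+S_N(k)$ equals $|S|$ if $k\notin S$ and $|S|-1$ if $k\in S$. Next, letting $\chi$ denote the Legendre symbol modulo $p$ (with $\chi(0)=0$), for $k\notin S$ one has $\chi(f_S(k))=\prod_{a\in S}\chi(k-a)=(-1)^{S_N(k)}$, and $\chi(f_S(k))=0$ for $k\in S$. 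Hence the affine point count $\sum_{k\in\mathbb{F}_p}(1+\chi(f_S(k)))$, together with the fixed contribution at infinity dictated by the chosen model of the curve, pins down $|X_S(\mathbb{F}_p)|$ in terms of $|S|$ and $B:=|\{k:\chi(f_S(k))=-1\}|$.

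I then split on the parity of $|S|$. When $|S|$ is even, $S_Q(k)$ and $S_N(k)$ have the same parity for $k\notin S$, and both are odd precisely when $\chi(f_S(k))=-1$; for $k\in S$ exactly one of the two is odd. Combining these contributions gives $wt(c)=2B+|S|$, which matches case (1). When $|S|$ is odd I use the complement trick: in $\mathbb{F}_2[x]/(x^p-1)$ one has $r_S+r_{S^c}=1+x+\cdots+x^{p-1}$ and $r_Q(1+x+\cdots+x^{p-1})=\tfrac{p-1}{2}(1+x+\cdots+x^{p-1})\pmod 2$, and likewise for $r_N$. Since $\tfrac{p-1}{2}$ is even when $p\equiv 1\pmod 4$ and odd when $p\equiv 3\pmod 4$, the codeword attached to $S^c$ either equals or is the bitwise complement of the codeword attached to $S$; applying case (1) to the even-cardinality set $S^c$ then yields cases (2) and (3) respectively.

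The main obstacle is fixing the convention for $|X_S(\mathbb{F}_p)|$ --- specifically, which and how many points at infinity are included, which depends on the parity of $\deg f_S=|S|$ --- so that the character-sum identity $wt(c)=2B+|S|$ matches $2p-|X_S(\mathbb{F}_p)|$ exactly, with no off-by-one discrepancy. Once that convention is nailed down, the remainder is elementary bookkeeping with parities and Legendre symbols.
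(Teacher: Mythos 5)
Your argument is correct, and it is worth noting that the paper itself offers no proof of this proposition --- it is quoted from Joyner's paper \cite{Joyner2008} --- so your write-up is a genuine self-contained reconstruction rather than a paraphrase of anything in the text. All the key steps check out: the coefficient of $x^k$ in $r_Qr_S$ is indeed $|(k-S)\cap Q|\bmod 2$; the identity $S_Q(k)+S_N(k)=|S|-[k\in S]$ together with $\chi(f_S(k))=(-1)^{S_N(k)}$ for $k\notin S$ gives $wt(c)=2B+|S|$ when $|S|$ is even; and writing $|X_S(\mathbb{F}_p)|=2A+|S|$ with $A+B+|S|=p$ yields $wt(c)+|X_S(\mathbb{F}_p)|=2p$ exactly. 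The ``main obstacle'' you flag about points at infinity dissolves here, because the paper explicitly adopts the convention (see the remark immediately after the proposition, and again in Section 4) that $X_S(\mathbb{F}_p)$ counts only affine points; with that convention there is no off-by-one correction, which is precisely why the authors restated Joyner's original result in this modified form. Your complement trick for odd $|S|$ is also right: $r_Qr_{S^c}=r_Qr_S+\tfrac{p-1}{2}\,(1+x+\cdots+x^{p-1})$ in $\mathbb{F}_2[x]/(x^p-1)$, so the codeword for $S^c$ equals the codeword for $S$ when $p\equiv 1\pmod 4$ and its bitwise complement when $p\equiv 3\pmod 4$, and applying case (1) to the even-cardinality set $S^c$ gives cases (2) and (3). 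The only cosmetic caveat is that you should state up front that you are using the affine-point convention rather than leaving it as an unresolved degree of freedom, since the cleanliness of the final identity depends on it.
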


\begin{rmk}
	By the points on hyperelliptic curves we mean affine points, not including the points at infinity.
\end{rmk}

This relation builds a connection between Goppa's Conjecture and hyperelliptic curves as in the following theorem.  Interested readers can find more details in \cite{Joyner2008}.

\begin{thm}[Joyner]
	Let $B(c,p)$ be the statement: For all subsets $S \subseteq \mathbb{F}_p$, $|X_S(\mathbb{F}_p)| \le c \cdot p$ holds.  If $B(1.77,p)$ is true for infinitely many primes with $p\equiv 1 \pmod 4$, then Goppa's Conjecture is false.
\end{thm}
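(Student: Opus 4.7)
The plan is to leverage Proposition \ref{prop-Joyner} to translate the hypothesis $B(1.77,p)$---a uniform bound on point counts of hyperelliptic curves---into a uniform lower bound on the minimum distance of the associated QQR code, thereby producing a sequence of binary linear codes that beats the asymptotic Gilbert--Varshamov bound and so contradicts Goppa's Conjecture.

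First, I would fix a prime $p\equiv 1\pmod 4$ satisfying $B(1.77,p)$ and consider the QQR code $C$ of length $2p$ and dimension $p$. Every nonzero codeword has the form $c=(r_Qr_S,r_Nr_S)$ for some nonempty $S\subseteq\mathbb{F}_p$, and cases (1) and (2) of Proposition \ref{prop-Joyner} (the two cases relevant when $p\equiv 1\pmod 4$) express its weight as $2p-|X_T(\mathbb{F}_p)|$ with $T$ equal to either $S$ or $S^c$. Invoking the hypothesis then forces $wt(c)\ge 2p-1.77p = 0.23p$, so $d(C)\ge 0.23p$ and the relative minimum distance satisfies $\delta(C)\ge 0.115$.

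Running this for infinitely many such primes yields an infinite family of $[2p,p]$ binary codes of rate exactly $1/2$ with $\delta\ge 0.115$, hence $\alpha_2(0.115)\ge 1/2$ in the notation of Manin. A direct evaluation of
\[
	H_2(0.115) = -0.115\log_2(0.115) - 0.885\log_2(0.885)
\]
shows that $1-H_2(0.115)<1/2$, so the equality $\alpha_2(\delta) = 1-H_2(\delta)$ predicted by Goppa's Conjecture is strictly violated at $\delta=0.115$, completing the contradiction.

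The argument is essentially a packaging exercise: Proposition \ref{prop-Joyner} does the heavy lifting of converting point counts into code weights, and the remaining content is the numerical fact that the particular constant $1.77$ is tight enough, i.e.\ that $0.115$ strictly exceeds the value (near $0.110$) at which the binary entropy function equals $1/2$. I do not expect a genuine obstacle here; the only point worth double-checking is that cases (1) and (2) of Proposition \ref{prop-Joyner} together cover every nonzero codeword in the $p\equiv 1\pmod 4$ setting, so that the weight bound applies uniformly.
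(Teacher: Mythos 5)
The paper gives no proof of this theorem---it is quoted from Joyner with a pointer to \cite{Joyner2008}---but your argument is the intended one and is essentially correct: cases (1) and (2) of Proposition \ref{prop-Joyner} do cover every nonempty $S$ when $p\equiv 1\pmod 4$, the hypothesis forces $wt(c)\ge 2p-1.77p=0.23p$ on every nonzero codeword, hence $\delta\ge 0.115$, and $1-H_2(0.115)\approx 0.485<\tfrac{1}{2}$, which matches the paper's own remark that for rate-half codes exceeding Gilbert--Varshamov is equivalent to $\delta>0.11$. The one claim you should soften is ``dimension $p$, rate exactly $1/2$'': the paper states that QQR codes are exactly rate half only when $p\equiv 3\pmod 4$ and merely asymptotically rate half otherwise, so for $p\equiv 1\pmod 4$ you may only conclude that the rates tend to $1/2$ along your infinite family---but since Manin's $\alpha_2(\delta)$ is defined via rates approaching $R$, this still yields $\alpha_2(0.115)\ge \tfrac{1}{2}>1-H_2(0.115)$ and the contradiction stands.
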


Not only do QQR codes play an important role in this connection, but they are very good codes when $p\equiv 3 \pmod 8$.  Computational results so far all exceed the Gilbert-Varshamov bound.  Since QQR codes are rate half codes, exceeding the Gilbert-Varshamov bound is equivalent to $\delta > 0.11$.  This gives Goppa's Conjecture a serious challenge.

\section{Weight Polynomials of QQR codes}
\label{sec:weight_polynomials_of_qqr_codes}

In this section, we will show a new result on weight polynomials of QQR codes.

\begin{defn}
	The \emph{weight polynomial} of a code (or a subset of it) is 
	\[
		A_C: = \sum^{n}_{i=0} A_i x^{n-i} y^i
	\]
	where $A_i$ denotes the number of codewords with weight $i$ in $C$, and $n$ is the length of the code.
\end{defn}

When computing the weight polynomials of QQR codes, we found that they are divisible by $(x^2 + y^2)^m$, where $m$ is at least its minimum distance minus 1, and $m \equiv 3 \pmod 4$.  See Table \ref{tab1}.  We can also see that for these $p$'s, $\delta$ are all well above $0.11$.

\begin{table}[h]
	\centering
	\begin{tabular}{c c c c}
		$p$ & $d$ & $\delta$ & Divisible by  \\
		\hline
		3 &  2 & 0.33 & $(x^2 + y^2)^3$ \\
		11  & 6 & 0.27 & $(x^2 + y^2)^7$ \\
		19 & 8 &  0.21 & $(x^2 + y^2)^7$ \\
		43  & 14 & 0.16& $(x^2 + y^2)^{15}$ \\
		59 & 18 & 0.15 & $(x^2 + y^2)^{19}$ \\
		67 & 22 & 0.16 & $(x^2 + y^2)^{23}$ \\
		\hline
	\end{tabular}
	\vspace{0.1 in}
	\caption{Computational Results}
	\label{tab1}
\end{table}

The fact that $m \equiv 3 \pmod 4$ can be shown using Gleason's theorem.
\begin{thm}[Gleason]
	\label{Gleason}
	If $C$ is self-dual code, then its weight polynomial $A_C$ is a polynomial in 
	\begin{align*}
		G(x,y) &=x^2 + y^2 \\
		J(x,y) &=x^2y^2(x^2-y^2)^2
	\end{align*}
\end{thm}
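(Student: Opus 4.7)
The plan is to translate the self-duality of $C$ into invariance of its weight enumerator $A_C(x,y)$ under a finite subgroup of $\mathrm{GL}_2(\mathbb{C})$, and then to identify the invariant ring of that group. Two symmetries of $A_C$ come for free: first, since $C$ is binary self-dual, for any $c \in C$ one has $\mathrm{wt}(c) \equiv c \cdot c = 0 \pmod 2$, so every weight is even and $A_C(x,y) = A_C(x,-y)$; equivalently, $A_C$ is invariant under $T_2 = \mathrm{diag}(1,-1)$. Second, the MacWilliams identity together with $C = C^\perp$ and $|C| = 2^{n/2}$ yields
\[
A_C(x,y) = A_C\!\left(\tfrac{x+y}{\sqrt{2}}, \tfrac{x-y}{\sqrt{2}}\right),
\]
i.e.\ invariance under the Hadamard-type matrix $T_1 = \tfrac{1}{\sqrt{2}}\begin{pmatrix} 1 & 1 \\ 1 & -1 \end{pmatrix}$.

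Next, let $\Gamma := \langle T_1, T_2 \rangle \subseteq \mathrm{GL}_2(\mathbb{C})$. A short matrix computation gives $T_1^2 = T_2^2 = I$ and $(T_1 T_2)^8 = I$, so $\Gamma$ is a finite group of order $16$. I would then compute the Hilbert series of the invariant ring $\mathbb{C}[x,y]^{\Gamma}$ using Molien's formula
\[
\Phi_{\Gamma}(t) = \frac{1}{|\Gamma|}\sum_{g \in \Gamma} \frac{1}{\det(I - tg)},
\]
whose direct evaluation yields $\Phi_{\Gamma}(t) = 1/\bigl((1-t^2)(1-t^8)\bigr)$. This matches exactly the Hilbert series of a free polynomial algebra on two algebraically independent generators of degrees $2$ and $8$, so the invariant ring will be such a polynomial algebra as soon as suitable generators are exhibited.

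Finally, I would check that $G$ and $J$ themselves play those roles. The polynomial $G(x,y) = x^2 + y^2$ is clearly fixed by $T_2$, and under $T_1$ transforms to $\tfrac{1}{2}\bigl((x+y)^2 + (x-y)^2\bigr) = x^2 + y^2$; the polynomial $J(x,y) = x^2 y^2 (x^2 - y^2)^2$ involves only even powers of $y$, so is fixed by $T_2$, and a direct substitution verifies $T_1$-invariance. Since $G$ and $J$ are algebraically independent for degree reasons, the inclusion $\mathbb{C}[G,J] \subseteq \mathbb{C}[x,y]^{\Gamma}$ becomes equality by comparison of Hilbert series, and hence $A_C \in \mathbb{C}[G,J]$. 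The main obstacle will be the Molien computation: one must correctly enumerate $\Gamma$ (or its conjugacy classes), track the $1/\sqrt{2}$ scalings in the eigenvalues of $T_1$ and its products with $T_2$, and assemble sixteen rational functions into the clean answer $1/\bigl((1-t^2)(1-t^8)\bigr)$, since any miscount of $|\Gamma|$ or eigenvalue slip would produce invariants of the wrong degrees and derail the identification $\mathbb{C}[x,y]^{\Gamma} = \mathbb{C}[G,J]$.
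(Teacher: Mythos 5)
Your proposal is correct: it is the standard invariant-theoretic proof of Gleason's theorem, and all the key computations check out --- $\Gamma=\langle T_1,T_2\rangle$ is dihedral of order $16$ since $T_1T_2$ is rotation by $45^\circ$, the Molien series is $1/\bigl((1-t^2)(1-t^8)\bigr)$ (a reflection group with degrees $2$ and $8$), and $G$, $J$ are algebraically independent invariants of those degrees, so $\mathbb{C}[x,y]^\Gamma=\mathbb{C}[G,J]$ and the homogeneous invariant $A_C$ lies in it. The paper itself states Gleason's theorem as a known classical result and gives no proof, so there is nothing to compare against; your argument is the canonical one found in MacWilliams--Sloane.
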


So for QQR codes, $A_C = \sum_{2i + 8j = 2p} a_i G(x,y)^i J(x,y)^j$.  Since the degree of $A_C$ is $2p \equiv 6 \pmod 8$, and $8j \equiv 0 \pmod 8$, we must have $i \equiv 3 \pmod 4$ for all $i$.  Therefore $m \equiv 3 \pmod 4$.

The other property of the weight polynomial is stated in the following theorem.

\begin{thm}
	\label{thm1}
	The weight polynomial of a QQR code is divisible by $(x^2 + y^2)^{d-1}$, where $d$ is its minimum distance.
\end{thm}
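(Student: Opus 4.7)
The plan is to translate the divisibility of $A_C(x,y)$ by $(x^2+y^2)^{d-1}$ into a lower bound on the minimum weight of the \emph{shadow} of $C$, and then establish that bound using the $PSL_2(p)$-symmetry.

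Let $C_0 := \{c \in C : wt(c) \equiv 0 \pmod 4\}$ be the doubly-even subcode of $C$ (of index $2$), and let the shadow $S := C_0^\perp \setminus C$. A direct MacWilliams computation, using that $C = C^\perp$, yields the closed form
\[
W_S(x,y) \;=\; \tfrac{1}{2^{p}}\, W_C(x+y,\, i(x-y)).
\]
Since every codeword of $C$ has even weight, $W_C(x,y) = B(x^2, y^2)$ for a degree-$p$ polynomial $B$; substituting this into the shadow formula and using $(x+y)^2 - (x-y)^2 = 4xy$, together with the nonsingular linear change of variables $(s,t) = (x+y,\, x-y)$, one verifies the equivalence
\[
(x^2+y^2)^{m} \mid A_C(x,y) \;\iff\; (u+v)^{m} \mid B(u,v) \;\iff\; (xy)^{m} \mid W_S(x,y)
\]
for every nonnegative integer $m$. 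Moreover, $wt(\mathbf 1) = 2p \equiv 2 \pmod 4$, so $\mathbf 1 \in C \setminus C_0$, and $v \mapsto v + \mathbf 1$ is an involution of $S$. Consequently $W_S$ is palindromic in $(x,y)$, and the divisibility $(xy)^{d-1} \mid W_S$ is equivalent to the single bound that $wt(s) \ge d-1$ for every $s \in S$.

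The crux is therefore this shadow-weight bound. Using the coset description $s \in S \iff \langle s, c\rangle \equiv wt(c)/2 \pmod 2$ for every $c \in C$, the claim becomes a parity constraint on $T := \mathrm{supp}(s)$: $|T \cap \mathrm{supp}(c)| \equiv wt(c)/2 \pmod 2$ for every $c \in C$. Assuming for contradiction that $|T| \le d-2$ and specializing to the minimum-weight codewords of $C$, one exploits the transitive action of $PSL_2(p)$ on the $2p$ coordinates (which produces a rich orbit of weight-$d$ supports) to extract counting identities for $|T \cap \mathrm{supp}(c)|$ that are incompatible with the parity constraint. This step is the main obstacle: the bound is not a formal consequence of self-duality alone, since for general Type I self-dual codes the shadow may contain elements of weight well below $d-1$, and the argument relies essentially on the $PSL_2(p)$-symmetry to rule out such low-weight shadow elements. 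Once this shadow-weight bound is proved, the chain of equivalences above yields $(x^2+y^2)^{d-1} \mid A_C(x,y)$.
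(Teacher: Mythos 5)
Your reduction is the same as the paper's: pass to the doubly-even subcode $C_0$, form the shadow $S=C_0^\perp\setminus C$, use the MacWilliams-type identity $W_S(x,y)=\tfrac{1}{|C|}W_C(x+y,i(x-y))$ to convert $(x^2+y^2)^{m}\mid A_C$ into $(xy)^{m}\mid W_S$, and observe (correctly, and more carefully than the paper, via the involution $s\mapsto s+\mathbf 1$) that this reduces to the single bound $wt(s)\ge d-1$ for all $s\in S$. Up to that point everything is sound. But that bound is the entire content of the theorem, and you do not prove it: you pose a strategy (parity constraints $|T\cap\mathrm{supp}(c)|\equiv wt(c)/2 \pmod 2$ tested against the $PSL_2(p)$-orbit of minimum-weight supports), concede it is ``the main obstacle,'' and stop. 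No counting identity is exhibited and no contradiction is derived, so the proof is incomplete at its crux. There is also a framing error in the step you sketch: $PSL_2(p)$ does not act on the $2p$ coordinates of $C$ itself. Only the affine subgroup generated by $y\mapsto y+1$ and $y\mapsto\rho^2 y$ does; the inversion $y\mapsto -1/y$ requires adjoining a point at infinity to each half, i.e.\ it acts only on a suitably extended code of length $2p+2$.

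The paper closes the gap quite differently and more concretely. Since $C$ is self-dual and $\dim C_0^\perp=p+1$, the shadow is the single coset $C_0^\perp\setminus C = C+\alpha$ with $\alpha=(\sum_i x^i,0)$, consisting entirely of odd-weight vectors. One extends $C_0^\perp$ by \emph{two} parity checks (one per half) to a code $\hat C$ of length $2p+2$ indexed by two copies of $\mathbb{P}^1(\mathbb{F}_p)$, proves via Perron's theorem that $PSL_2(p)$ acting diagonally by fractional linear transformations preserves $\hat C$, and then argues: a minimum-weight odd vector $c\in C_0^\perp$ has some coordinate $y$ that vanishes in both halves (else $wt(c)\ge p$); transitivity supplies a group element swapping $y$ with $\infty$, and applying it to $\hat c$ and deleting the parity positions yields an even vector of weight $wt(c)+1$, which therefore lies in $C$. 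Hence $d(C)\le d(C_0^\perp)+1$, i.e.\ every shadow vector has weight at least $d(C)-1$. If you want to complete your write-up, this coset/extension/transitivity argument is the missing piece; your proposed intersection-parity counting is not carried out and there is no evidence it yields the contradiction you need.
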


Since $m$ needs to satisfy $m \equiv 3 \pmod 4$, $m$ is larger than $d-1$ sometimes.

To prove Theorem \ref{thm1}, we need to introduce \emph{shadows}.

\subsection{Shadows}
\label{sub:shadows_of_codes}

We say a binary code is \emph{doubly-even} if all its weights are divisible by $4$, or \emph{singly-even} if its weights are even, but not doubly-even.

Let $C$ be a binary self-orthogonal code.  Then $C$ is even.  Let $C_0$ be the subset of doubly-even codewords of $C$.  If $C$ is singly-even, then $C_0$ is a linear subcode of index $2$ in $C$.

\begin{defn}[Shadow\cite{Rains2002}]
	The \emph{shadow} $S$ of a self-orthogonal binary code $C$ is 
	\[
		S = 
		\begin{cases}
			C_0^\perp \backslash C^\perp & \text{ if } C \text{ is singly-even}\\
			C^\perp & \text{ if } C \text{ is doubly-even}
		\end{cases}	
	\]
\end{defn}

We will follow the notation of \cite{Rains2002} and denote the weight polynomial of the shadow of $C$ as $S_C(x,y)$.  $S_C(x,y)$ can be computed from the weight polynomial of $C$.

\begin{lem}
	$S_C(x,y) = \frac{1}{|C|}A_C(x+y,i(x-y))$, where $i^2 = -1$.
	\label{lem3}
\end{lem}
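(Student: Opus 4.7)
The plan is to derive the formula from the classical MacWilliams identity, handling the two cases in the definition of the shadow separately. Recall that MacWilliams gives
\[
    A_{C^\perp}(x,y) = \frac{1}{|C|} A_C(x+y,\,x-y).
\]

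In the doubly-even case, $S = C^\perp$, so I would simply observe that when every weight $w$ appearing in $A_C$ satisfies $4 \mid w$, we have $i^w = 1$, hence
\[
    A_C(x+y,\,i(x-y)) = \sum_{c \in C} i^{\,wt(c)}(x+y)^{n-wt(c)}(x-y)^{wt(c)} = A_C(x+y,\,x-y),
\]
and the formula reduces to the MacWilliams identity applied to $C^\perp$.

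The interesting case is when $C$ is singly-even. Here I would first note the double chain $C_0 \subset C \subset C^\perp \subset C_0^\perp$ of self-orthogonal containments, and the fact that $[C:C_0]=[C_0^\perp:C^\perp]=2$, so the shadow is a single coset and
\[
    A_S(x,y) = A_{C_0^\perp}(x,y) - A_{C^\perp}(x,y).
\]
The second step is to express $A_{C_0}$ in terms of $A_C$. Since the weight of every codeword of $C$ is even, and $wt(c)\equiv 0\pmod 4$ exactly for $c\in C_0$ while $wt(c)\equiv 2\pmod 4$ for $c\in C\setminus C_0$, the substitution $y\mapsto iy$ acts as $\pm 1$ depending on the class. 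A short verification then gives the projection formula
\[
    A_{C_0}(x,y) = \tfrac{1}{2}\bigl(A_C(x,y) + A_C(x,iy)\bigr).
\]

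Applying MacWilliams to both $C$ and $C_0$ (using $|C_0|=|C|/2$) and subtracting, the plan is to compute
\[
    A_S(x,y) = \frac{2}{|C|}A_{C_0}(x+y,x-y) - \frac{1}{|C|}A_C(x+y,x-y),
\]
and substitute the projection formula for $A_{C_0}$. The $A_C(x+y,x-y)$ terms should cancel, leaving exactly $\frac{1}{|C|}A_C(x+y,i(x-y))$. The only real obstacle is bookkeeping: making sure the indices on $C$ versus $C_0$, and the two applications of MacWilliams, are combined without a stray factor of $2$. Once the algebra balances, both cases give the single unified formula claimed in the lemma.
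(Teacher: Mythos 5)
Your proof is correct and follows essentially the same route as the paper: the projection formula $A_{C_0}(x,y)=\tfrac12\bigl(A_C(x,y)+A_C(x,iy)\bigr)$, MacWilliams applied to $C_0$ and to $C$, and subtraction, with the doubly-even case handled by noting $i^{wt(c)}=1$. (The paper's write-up appears to swap the labels of the two cases; your assignment of which case is immediate and which needs the $C_0$ computation is the intended one.)
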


We include the proof given in \cite{Rains2002} here since it's short.

\begin{proof}
	If $C$ is singly-even, this is immediate using MacWilliams identity.  Assume $C$ is doubly-even.
	$A_{C_0}$ consists of the terms in $A_C$ whose powers of $y$ are divisible by $4$. So 
	\[
		A_{C_0}(x,y) = \frac{1}{2}(A_C(x,y) + A_C(x,iy))
	\]

	Using MacWilliams identity, we have
	\[
		A_{C_0^\perp}(x,y) = \frac{1}{|C|}(A_C(x+y,x-y) + A_C(x+y, i(x-y)))
	\]

	So
	\[
		S_C(x,y) = A_{C_0^\perp} - A_{C^\perp} = \frac{1}{|C|}A_C(x+y, i(x-y))
	\]
\end{proof}

Under a simple change of variable, the following lemma is immediate.

\begin{lem}
	$A_C(x,y) = \frac{1}{|C|}S_C(x-iy,x+iy)$.
\end{lem}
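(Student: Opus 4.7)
The plan is to obtain the identity by a direct substitution in Lemma~\ref{lem3}, with an appeal to homogeneity and the self-dual normalization at the end. Concretely, in the identity $S_C(x,y) = \frac{1}{|C|} A_C(x+y, i(x-y))$ I would replace the two formal variables by the linear forms $x \mapsto x-iy$ and $y \mapsto x+iy$, and simplify the resulting arguments of $A_C$.

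Carrying out the substitution, the first argument of $A_C$ becomes $(x-iy) + (x+iy) = 2x$, and the second becomes $i\bigl((x-iy) - (x+iy)\bigr) = i \cdot (-2iy) = 2y$. Hence
\[
	S_C(x-iy,\, x+iy) \;=\; \tfrac{1}{|C|}\, A_C(2x,\, 2y).
\]
Since $A_C$ is homogeneous of total degree $n$, where $n$ is the length of the code, we have $A_C(2x, 2y) = 2^n A_C(x,y)$, so that $S_C(x-iy, x+iy) = \frac{2^n}{|C|} A_C(x,y)$.

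To match the claimed form $A_C(x,y) = \frac{1}{|C|} S_C(x-iy, x+iy)$, I would invoke self-duality: QQR codes are self-dual, so $|C| = 2^{n/2}$ and therefore $2^n / |C|^2 = 1$. Multiplying the displayed equation above by $1/|C|$ then yields exactly the asserted identity.

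There is no real obstacle to this plan; it is genuinely immediate from Lemma~\ref{lem3}. The only point worth flagging is that the lemma, as stated, quietly relies on the self-dual normalization $|C|^2 = 2^n$ to absorb the factor $2^n$ coming from homogeneity of $A_C$. This is automatic in the QQR setting of the paper, so the substitution argument suffices without further work.
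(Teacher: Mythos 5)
Your proposal is correct and is exactly the paper's (unwritten) argument: the paper simply asserts the lemma is ``immediate under a simple change of variable,'' and your substitution $x \mapsto x-iy$, $y \mapsto x+iy$ into Lemma~\ref{lem3}, together with homogeneity giving $A_C(2x,2y)=2^nA_C(x,y)$, is that change of variable carried out in full. Your observation that the stated normalization $\frac{1}{|C|}$ silently requires $|C|^2=2^n$ (i.e.\ $\dim C = n/2$, automatic for the self-dual QQR codes) is a genuine point the paper glosses over, and worth flagging since Lemma~\ref{lem3} itself holds for arbitrary self-orthogonal codes.
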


When $C$ is singly-even, the shadow $S$ of $C$ is $C_0^\perp \backslash C^\perp$, which does not contain the $0$ codeword.  Therefore if $S$ has minimum distance $d$, $S_C$ is divisible by $(xy)^d$.  From the lemma above, we immediately have the following.

\begin{lem}
	Let $C$ be singly-even, and $d$ be the minimum distance of its shadow.  The weight polynomial of $C$ is divisible by $(x^2 + y^2)^d$.
	\label{lem2}
\end{lem}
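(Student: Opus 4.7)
The plan is to combine the inversion formula $A_C(x,y) = \frac{1}{|C|}\,S_C(x-iy,\,x+iy)$ with the observation, already noted in the paragraph before the statement, that $S_C$ is divisible by $(xy)^d$ — or more modestly, by $y^d$ — because the shadow of a singly-even code does not contain the zero word. Everything else is a change-of-variable plus a complex conjugation argument.

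Concretely, first I would write $S_C(u,v) = v^d\,T(u,v)$ for some $T \in \mathbb{C}[u,v]$, using that every monomial $u^{n-i}v^i$ appearing in $S_C$ has $i \geq d$. Substituting $u = x-iy$ and $v = x+iy$ gives
\[
A_C(x,y) \;=\; \tfrac{1}{|C|}\,(x+iy)^d\,T(x-iy,\,x+iy),
\]
so $(x+iy)^d \mid A_C(x,y)$ in $\mathbb{C}[x,y]$. The next step is to exploit the fact that $A_C$ has integer (in particular real) coefficients: applying complex conjugation to the coefficients of both sides of the identity above converts $x+iy$ into $x-iy$, yielding a factorization of $A_C$ with $(x-iy)^d$ pulled out. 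Hence $(x-iy)^d \mid A_C(x,y)$ as well.

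Finally, the linear forms $x+iy$ and $x-iy$ are non-associate irreducibles in $\mathbb{C}[x,y]$ (they are linearly independent), so they are coprime, and therefore their $d$-th powers are coprime as well. Combining the two divisibilities gives $(x+iy)^d(x-iy)^d = (x^2+y^2)^d \mid A_C(x,y)$ in $\mathbb{C}[x,y]$; since $A_C$ and $(x^2+y^2)^d$ both lie in $\mathbb{Z}[x,y]$ and the latter is monic in $x$, the quotient automatically has rational coefficients by the division algorithm, so the divisibility holds over $\mathbb{Q}[x,y]$ as claimed. There is no real obstacle here — the only thing to be careful about is not simply invoking "$(x-iy)^d$ divides because of the MacWilliams-type symmetry" without justification; the cleanest justification is the complex-conjugation of coefficients, which is what I would write out explicitly.
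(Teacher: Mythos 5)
Your proof is correct, and its core is the same as the paper's: substitute into the identity $A_C(x,y) = \tfrac{1}{|C|}S_C(x-iy,\,x+iy)$ and use the minimum distance of the shadow. The one place you diverge is in how you obtain the second factor $(x-iy)^d$. The paper asserts that $S_C$ is divisible by $(xy)^d$ and gets both factors at once, since $(xy)^d$ pulls back to $((x-iy)(x+iy))^d=(x^2+y^2)^d$; but the $x^d$ part of that divisibility amounts to the claim that every shadow vector has weight at most $n-d$, which does not follow merely from the shadow omitting the zero word (it holds here because the shadow is closed under complementation, e.g.\ when the all-ones word lies in $C$, a point the paper leaves implicit). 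You instead use only the uncontroversial $y^d \mid S_C$ and recover $(x-iy)^d \mid A_C$ by conjugating coefficients and using that $A_C$ is real, then conclude via coprimality of $x+iy$ and $x-iy$ in $\mathbb{C}[x,y]$. That conjugation-plus-coprimality step is sound and in fact sidesteps the small justificatory gap in the paper's one-line argument, at the cost of a slightly longer write-up.
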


Therefore, it is clear that, to prove Theorem \ref{thm1}, we just need to show the minimum distance of the shadow is at least its minimum distance minus 1.

In the next section, we will show this by proving a result about the automorphism group of extended QQR codes.

\subsection{Automorphism groups}
\label{sub:automorphism_groups_}

Let $C$ be the QQR code associated with $p$.  

Since $(r_Q,r_N) \in C$, and $wt(r_Q,r_N) = |Q| + |N| = p-1 \equiv 2 \pmod 4$, $C$ is singly-even.  By definition, the shadow of $C$ is $C_0^\perp \backslash C^\perp$.

Let $e_i = x^i(r_Q, r_N)=(x^i r_Q, x^i r_N)$.  

\[
	\{ e_i| i=0, \cdots, p-1 \}
\]
is a basis for $C$.

Denote $ \mathfrak{1}$ to be the all one codeword $( \sum_{i=0}^{p-1} x^i, \sum_{i=0}^{p-1} x^i)$.  

All the codewords can be expressed in both vector forms and polynomial forms like $\mathfrak{1}$.  We will alternate between them depending on which one is appropriate in the context.

\begin{prop}
	\label{prop1}
	$C_0$ is generated by 
	\[
		\{ \mathfrak{1} - e_i | i=0, \cdots, p-1\}
	\]
\end{prop}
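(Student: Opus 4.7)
My plan is to proceed in three short steps: a dimension count for $C_0$, a weight calculation showing each $\mathfrak{1} - e_i$ lies in $C_0$, and an independence argument forcing equality of the span with $C_0$.

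First I would pin down $\dim C_0$. Since $\{e_0, \ldots, e_{p-1}\}$ is a basis for $C$ we have $\dim C = p$, and since the generator $e_0 = (r_Q, r_N)$ has weight $|Q| + |N| = p - 1 \equiv 2 \pmod 4$, the code $C$ is singly-even (as already noted in the text preceding the proposition). Consequently $C_0$ is a subspace of codimension $1$ in $C$, so $\dim C_0 = p - 1$.

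Next I would check containment, i.e., that each $\mathfrak{1} - e_i = \mathfrak{1} + e_i$ lies in $C_0$. Written in $\mathbb{F}_2^{2p}$, this vector is the bitwise complement of $e_i$, and every cyclic shift $e_i$ has the same weight $p - 1$ as $e_0$. Hence $wt(\mathfrak{1} - e_i) = 2p - (p - 1) = p + 1$. The standing hypothesis $p \equiv 3 \pmod 4$ gives $p + 1 \equiv 0 \pmod 4$, so $\mathfrak{1} - e_i \in C_0$ for every $i$.

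Finally I would show the span of $\{\mathfrak{1} - e_i : i = 0, \ldots, p-1\}$ already has dimension $p - 1$, which combined with the previous step forces equality with $C_0$. Taking pairwise sums, $(\mathfrak{1} - e_0) + (\mathfrak{1} - e_i) = e_0 + e_i$ for $i = 1, \ldots, p-1$; these $p - 1$ vectors are linearly independent because $\{e_0, \ldots, e_{p-1}\}$ is a basis of $C$. So the span has dimension at least $p - 1$, matching $\dim C_0$, and equality follows.

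The only non-routine point is the weight calculation, which crucially uses the standing hypothesis $p \equiv 3 \pmod 4$; without it, $wt(\mathfrak{1} - e_i) = p + 1$ would be only singly-even and the statement would fail as written. Everything else is pure linear algebra once one invokes the basis $\{e_i\}$ and the standard codimension-one relationship between a singly-even code and its doubly-even subcode.
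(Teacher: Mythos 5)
Your proof is correct and has the same broad outline as the paper's (containment of the span in $C_0$ plus a dimension count), but both sub-steps are executed differently, and in each case your route is arguably cleaner. For containment, the paper forms the code $C'$ generated by the $\mathfrak{1}-e_i$ and invokes Lemma \ref{lem1} to conclude that $C'$ is doubly-even; you instead observe that each generator individually has weight $p+1\equiv 0 \pmod 4$ and hence lies in $C_0$, which suffices because $C_0$ is a linear subcode of the singly-even self-orthogonal code $C$. This sidesteps Lemma \ref{lem1} entirely (a lemma whose correctness silently relies on self-orthogonality). For the dimension of the span, the paper shows the $p$ generators sum to zero and then rules out, by a parity case analysis, any dependence among $k\le p-2$ of them; your observation that the span contains the $p-1$ linearly independent vectors $e_0+e_i=(\mathfrak{1}-e_0)+(\mathfrak{1}-e_i)$, $i=1,\dots,p-1$, gets the lower bound in one line. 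The only point you leave implicit is that $\mathfrak{1}-e_i$ is a codeword of $C$ at all: membership in $C_0$ requires membership in $C$, not just weight divisible by $4$. This needs the identity $\sum_{i=0}^{p-1}e_i=\mathfrak{1}$, which holds because $\bigl(\sum_i x^i\bigr)r_Q=|Q|\sum_i x^i=\sum_i x^i$ as $|Q|=(p-1)/2$ is odd (and likewise for $r_N$); the paper verifies this in the course of its own argument, and you should add that one line.
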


The proof uses the following lemma.
\begin{lem}
	If each generator of a code has weight divisible by $4$, then so does every codeword.
	\label{lem1}
\end{lem}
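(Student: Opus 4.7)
The plan is to use the elementary weight identity for binary vectors,
\[
wt(u + v) = wt(u) + wt(v) - 2 \bigl| \mathrm{supp}(u) \cap \mathrm{supp}(v) \bigr|,
\]
and to upgrade it to a statement modulo $4$ by exploiting orthogonality. In the setting the lemma is applied to, the code in question is $C_0$, a subcode of the self-dual QQR code $C$, so $C_0 \subseteq C = C^\perp$ is self-orthogonal. Consequently, for any two codewords $u, v$ the intersection size $|\mathrm{supp}(u) \cap \mathrm{supp}(v)|$ equals the integer inner product $\langle u, v \rangle_{\mathbb{Z}}$, which is congruent to $u \cdot v \pmod 2$ and hence even. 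This upgrades the identity to $wt(u+v) \equiv wt(u) + wt(v) \pmod 4$ for any two codewords.

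With that congruence in hand, I will finish by a short induction on the number of generators in a representation of a codeword. Let $g_1, \ldots, g_k$ be the generators, each of weight divisible by $4$, and let $c = g_{i_1} + \cdots + g_{i_m}$ be an arbitrary codeword. The base case $m=1$ is the hypothesis. For the inductive step, write $c = c' + g_{i_m}$ with $c' = g_{i_1} + \cdots + g_{i_{m-1}}$; by the inductive hypothesis $wt(c') \equiv 0 \pmod 4$, and since both $c'$ and $g_{i_m}$ lie in the self-orthogonal code, the congruence above yields $wt(c) \equiv wt(c') + wt(g_{i_m}) \equiv 0 \pmod 4$.

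The only thing I should flag is that the lemma is only true under a hidden self-orthogonality assumption: without it, two weight-$4$ generators whose supports meet in an odd-size set sum to a weight-$2$ vector, which would violate the conclusion. So the substance of the proof is not the induction, which is routine, but rather the observation that self-orthogonality of the ambient code controls the parity of pairwise intersection sizes. In the paper's application this holds automatically for $C_0 \subseteq C^\perp$, and the argument then goes through cleanly.
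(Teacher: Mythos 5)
Your proof is correct, and it supplies an argument where the paper gives none: the paper simply labels the lemma ``a standard result'' and defers to MacWilliams--Sloane. Your route is the standard one --- the integer identity $wt(u+v)=wt(u)+wt(v)-2\,\lvert \mathrm{supp}(u)\cap\mathrm{supp}(v)\rvert$, upgraded to $wt(u+v)\equiv wt(u)+wt(v)\pmod 4$ via the evenness of the support intersection, followed by induction on the number of generators --- and every step checks out. More importantly, the caveat you flag is genuine and is the real content of your write-up: as literally stated the lemma is false (two weight-$4$ generators with supports meeting in an odd-size set sum to a word of weight $\equiv 2\pmod 4$), and the reference's version of the result carries the hypothesis that the generators be pairwise orthogonal. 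You are also right that the hypothesis is satisfied where the paper applies the lemma: the code there is $C'$, generated by the words $\mathfrak{1}+e_i$, and since $\sum_i e_i=\mathfrak{1}$ each of these lies in the self-dual code $C$, so $C'\subseteq C=C^\perp$ is self-orthogonal and your congruence applies to any two of its elements (in particular to the partial sum $c'$ and the next generator in your induction). One small imprecision: the code in question at the point of application is $C'$, not $C_0$ --- the whole purpose of the lemma there is to prove $C'\subseteq C_0$ --- but since self-orthogonality of $C'$ is inherited from $C$ rather than from $C_0$, this does not affect your argument.
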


This is a standard result that is easy to prove, and can be found in \cite{Macwilliams1977}.

\begin{proof}
	(of Proposition \ref{prop1}):

	Let $C'$ be the code generated by $\{ \mathfrak{1} - e_i |i=0, \cdots, p-1 \}$.  Since $ wt(\mathfrak{1} - e_i) = 2p - wt(e_i) = 2p - (p-1) \equiv 0 \pmod 4$, by Lemma \ref{lem1}, $C'$ is doubly-even.  Therefore $C' \subseteq C_0$.  

	Note that 
	\begin{align*}
		\sum_{i=0}^{p-1} \mathfrak{1} - e_i &=p \cdot \mathfrak{1} - \sum_0^{p-1} e_i \\
		&= \mathfrak{1} - (\sum_{i=0}^{p-1} x^i) e_0 \\
		&= \mathfrak{1} - (\sum_{i=0}^{p-1} x_i , \sum_{i=0}^{p-1} x_i) \\
		&= 0
	\end{align*}

	So $\{ \mathfrak{1} - e_i |i=0, \cdots, p-1 \}$ are linearly dependent.  The rank of $C'$ is less than or equal to $p-1$.

	On the other hand, if a subset of $\{1-e_i|i=0, \cdots, p-1\}$ with $k \le p-2$ elements is linearly dependent, we have
	\[
		\sum_{i=1}^k \mathfrak{1} - e_{n_i} = 0
	\]
	So
	\[
		\sum_{i=1}^k \mathfrak{1} - \sum_{i=1}^k e_{n_i} = 0
	\]

	If $k$ is odd, then $\sum_{i=1}^k e_{n_i}=\mathfrak{1}$.  But we already have $\sum_0^{p-1} e_i = \mathfrak{1}$.  Since $\{e_i\}$ is a basis of $C$, there can't be two different ways to write a vector in linear combinations of $e_i$'s.  Contradiction.

	If $k$ is even, then $\sum_{i=1}^k e_{n_i}=0$, contradictory to the $e_i$'s being linearly independent.

	We conclude that $C'$ has rank $p-1$.  Since $C_0$ also has dimension $p-1$, $C'=C_0$.  
\end{proof}

Let $ \alpha = (\sum_{i=0}^{p-1} x^i, 0)$, and $\beta = (0,\sum_{i=0}^{p-1} x^i)$.

\begin{prop}
	$C_0^\perp$ is generated by $C$ and $\alpha$(or by $C$ and $\beta$).
\end{prop}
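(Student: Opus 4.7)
The plan is to use the inclusion $C_0 \subseteq C$ combined with the fact that $C$ is self-dual to pin down the codimension of $C$ inside $C_0^{\perp}$, and then exhibit $\alpha$ as a single generator that extends $C$ to $C_0^{\perp}$.

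First I would record the dimension count. Since $C$ is self-dual of length $2p$, we have $\dim C = p$, and because $C$ is singly-even the subcode $C_0$ of doubly-even codewords has codimension $1$ in $C$, so $\dim C_0 = p-1$ and $\dim C_0^{\perp} = p+1$. Dualising the inclusion $C_0 \subseteq C$ and using $C^{\perp}=C$ gives $C \subseteq C_0^{\perp}$, with $C$ of codimension exactly $1$. Consequently, to prove the proposition it suffices to exhibit one element of $C_0^{\perp} \setminus C$; I will take $\alpha$ (the argument for $\beta$ is identical by symmetry).

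Next I would check $\alpha \notin C$. This is immediate because $\alpha$ has weight $p$, which is odd, whereas every codeword of $C$ has even weight.

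Finally, I would verify $\alpha \in C_0^{\perp}$ by pairing $\alpha$ against the generators of $C_0$ supplied by Proposition \ref{prop1}, namely $\mathfrak{1} - e_i$ for $i=0,\dots,p-1$. A direct calculation of the inner product over $\mathbb{F}_2$ gives
\[
  \alpha \cdot \mathfrak{1} \equiv p \pmod 2, \qquad \alpha \cdot e_i \equiv |Q| = \tfrac{p-1}{2} \pmod 2,
\]
since $\alpha$ is all-ones in the first block and zero in the second, so each of these inner products only sees the first block. Therefore
\[
  \alpha \cdot (\mathfrak{1} - e_i) \equiv p - \tfrac{p-1}{2} = \tfrac{p+1}{2} \pmod 2,
\]
and the standing hypothesis $p \equiv 3 \pmod 4$ makes $(p+1)/2$ even. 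Hence $\alpha$ is orthogonal to every generator of $C_0$, proving $\alpha \in C_0^{\perp}$.

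The argument is almost entirely bookkeeping; the one spot that demands attention is the last parity computation, where the hypothesis $p \equiv 3 \pmod 4$ must be invoked explicitly to ensure $(p+1)/2$ is even. This is what makes $\alpha$ (and symmetrically $\beta$) a valid extension generator, and completes the proof that $C_0^{\perp} = \langle C, \alpha \rangle = \langle C, \beta \rangle$.
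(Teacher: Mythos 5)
Your proof is correct and follows essentially the same route as the paper: establish $C \subseteq C_0^\perp$ with codimension one by the dimension count, note $\alpha \notin C$ by odd weight, and verify $\alpha \cdot (\mathfrak{1}-e_i) = p - |Q| = (p+1)/2 \equiv 0 \pmod 2$. Your explicit remark that $p \equiv 3 \pmod 4$ is what makes $(p+1)/2$ even is a point the paper uses silently, but it is the same argument.
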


\begin{proof}
	Since $C$ is self-dual, and $C_0 \subset C$, so $C \subseteq C_0^\perp$.  $\alpha$ is also in $C_0^\perp$ because
	\begin{align*}
		\alpha \cdot (\mathfrak{1}-e_i) & = \alpha \cdot \mathfrak{1} - \alpha \cdot e_i \\
		& = p - wt(x^i r_Q) \\
		& = p - |Q| \\
		& = (p + 1)/2 \\
		& = 0
	\end{align*}

	So $ \alpha \in C_0^\perp$.  Since $\alpha$ has odd weight, $\alpha \notin C$.  The code generated by $C$ and $\alpha$ has rank $p+1$, and so does $C_0^\perp$, and hence they are the same.
\end{proof}

From this proposition, we can see that $C$ is the even weight subcode of $C_0^\perp$.

Next, we will define an extended code for $C_0^\perp$ by adding two parity check columns.

\begin{defn}
	Let $\hat{C}$ be the extended code of $C_0^\perp$ by adding a parity check for the first $p$ bits and a parity check for the last $p$ bits, i.e. if 
	\[(a_0, a_1, \cdots, a_{p-1}, b_0, \cdots, b_{p-1} )\in C_0^\perp\]
	then it extends to \[
		(a_0, a_1, \cdots, a_{p-1}, \sum_{i=0}^{p-1} a_i, b_0, \cdots, b_{p-1}, \sum_{i=0}^{p-1} b_i) \in \hat{C}
	\]
\end{defn}

\begin{notation}
	If $c \in C$, denote $\hat{c}$ as the corresponding codeword in the extended code.
\end{notation}

Clearly, $\{\widehat{e_i}|i=0, \cdots, p-1\} \cup \{\hat{\alpha}\}$(or $\{\widehat{e_i}|i=0, \cdots, p-1\}\cup \{\hat{\beta}\}$) constitutes a basis for $\hat{C}$.

If we use $\{\widehat{e_i}|i=0, \cdots, p-1\} \cup \{\hat{\alpha}\}$ as the basis, then the generating matrix for $\hat{C}$ can be written as 

\[
	\left[
		\begin{array}{c c c| c |c c c|c}
			& & & 1 & & & & 1 \\
			& G_Q & & \vdots & & G_N & &\vdots   \\
			& & & 1 & & & & 1 \\
			\hline
			1 & \hdots & 1 & 1 & 0 & \hdots & 0 & 1	
		\end{array}
	\right]
\]

\bigskip

The permutations that showed up in our results act on the left half and the right half of a codeword in the same way.  For simplicity, in the following theorem and its proof, we relabel the positions in a codeword by their original positions modulo $p$, starting from $0$.  By convention, we label the parity check positions by $\infty$.  So starting from left, the positions in a codeword would be called position $0, \cdots, p-1, \infty, 0, \cdots, p-1, \infty$.

Below is the main result on the automorphism group of $\hat{C}$.

\begin{thm}
	The automorphism group of $\hat{C}$ contains $PSL_2(p)$ as a subgroup.  Here $PSL_2(p)$ is generated by the three permutations
	\begin{align*}
		S: y &\mapsto y+1 \\
		V: y &\mapsto \rho^2 y \\
		T: y &\mapsto - \frac{1}{y}
	\end{align*}
	where $\rho$ is a primitive element of $ \mathbb{F}_p$.
	\label{thm2}
\end{thm}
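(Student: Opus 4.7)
Since $S$, $V$, $T$ generate $PSL_2(p)$ (classical, via the Bruhat decomposition of $SL_2(\mathbb{F}_p)$), it suffices to check that each of these three Möbius transformations preserves $\hat{C}$. Checking $\sigma(\hat{C})\subseteq\hat{C}$ reduces to verifying that $\sigma$ sends each element of the basis $\{\widehat{e_0},\ldots,\widehat{e_{p-1}},\hat{\alpha}\}$ into $\hat{C}$; since $\sigma$ is a coordinate permutation, the inclusion then forces equality.

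For $S\colon y\mapsto y+1$, the action is a cyclic shift of the first $p$ positions of each half while fixing both $\infty$-positions, so $S(\widehat{e_i})=\widehat{e_{i+1}}$ and $S(\hat{\alpha})=\hat{\alpha}$ (the latter because $\alpha=(\mathfrak{1},0)$ is translation invariant). For $V\colon y\mapsto\rho^2 y$, multiplication by the square $\rho^2$ permutes $Q$ within itself and $N$ within itself (and fixes $0,\infty$), giving $V(\widehat{e_i})=\widehat{e_{\rho^2 i}}$ and $V(\hat{\alpha})=\hat{\alpha}$. Both send the basis into itself, so both lie in $\mathrm{Aut}(\hat{C})$.

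The heart of the proof is $T\colon y\mapsto -1/y$. Since $T$ acts separately on the two halves, $T(\hat{\alpha})=\hat{\alpha}$ immediately. For $T(\widehat{e_0})$: because $p\equiv 3\pmod 4$, $-1\in N$, whence $T(Q)=-Q=N$ and $T$ swaps $0\leftrightarrow\infty$. As $|Q|=(p-1)/2$ is odd, both parity bits of $\widehat{e_0}$ equal $1$; tracking the supports carefully shows that $T(\widehat{e_0})$ has left support $N\cup\{0\}$, right support $Q\cup\{0\}$, and both parity bits $0$. This coincides with $\widehat{e_0}+\hat{\alpha}+\hat{\beta}$, where $\hat{\beta}=\hat{\alpha}+\sum_{j=0}^{p-1}\widehat{e_j}\in\hat{C}$ because the identity $r_Q\mathfrak{1}=|Q|\mathfrak{1}=\mathfrak{1}$ (valid since $|Q|$ is odd) forces $\sum_j\widehat{e_j}=\hat{\alpha}+\hat{\beta}$.

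The main obstacle is $T(\widehat{e_i})$ for $i\neq 0$: the $T$-image of $(Q+i)\cup\{\infty\}$ is no longer a simple translate of $Q$ or $N$, and the behavior of the parity bits depends on whether $-i\in Q$, $-i\in N$, or $i=0$. I would carry out the support computation in each subcase and identify the result with a codeword of the form $\sum c_j\widehat{e_j}+d\hat{\alpha}$, whose left half is $\widehat{r_S r_Q}$ (twisted by the all-ones vector on the left if $d=1$) and whose right half is $\widehat{r_S r_N}$, for some $r_S=\sum c_j x^j\in\mathbb{F}_2[x]/(x^p-1)$. The simultaneous matching of left and right halves is enforced by the relation $n\in N\Leftrightarrow -n\in Q$, i.e.\ by $-1\in N$, so the same $r_S$ automatically works on both halves once chosen to fit one of them; this is where the quadratic-residue structure does the real work.
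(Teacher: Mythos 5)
Your treatment of $S$, $V$, $\hat{\alpha}$, and $\widehat{e_0}$ is correct and matches the paper's (your identity $T(\widehat{e_0})=\widehat{e_0}+\hat{\alpha}+\hat{\beta}$ is the paper's $\widehat{e_0}^T=\mathfrak{1}+\widehat{e_0}$, and your observation that $\sum_j\widehat{e_j}=\hat{\alpha}+\hat{\beta}$ is the right way to see $\hat{\beta}\in\hat{C}$). But the case you yourself flag as ``the main obstacle,'' namely $T(\widehat{e_i})$ for $i\neq 0$, is exactly where the entire content of the theorem lives, and you have only described an intention to do a support computation rather than carried it out. The missing ingredient is twofold. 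First, you never identify what codeword $T(\widehat{e_i})$ actually equals; the answer is the explicit identity $\widehat{e_i}^T=\hat{\beta}+\widehat{e_0}+\widehat{e_{-1/i}}$ when $i\in Q$ and $\widehat{e_i}^T=\hat{\alpha}+\widehat{e_0}+\widehat{e_{-1/i}}$ when $i\in N$ --- note the asymmetry between $\hat{\alpha}$ and $\hat{\beta}$ across the two cases, which your sketch (``the same $r_S$ automatically works on both halves'') does not anticipate and which shows the left and right halves do \emph{not} behave identically. Second, verifying these identities requires a counting argument: one must show which terms of $T\bigl(\sum_{r\in Q}x^{i+r}\bigr)$ cancel against terms of $\sum_{r\in Q}x^{r-1/i}$, and this rests on Perron's theorem on the distribution of quadratic residues in the translated sets $i+Q$ and $i+N$ (with $p=4k+3$, the counts $k$ versus $k+1$ are what force the non-residue-indexed terms to cancel in pairs while all residue-indexed terms survive). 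Your proposal invokes only the fact that $-1\in N$, which is not enough to drive this cancellation.

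So the plan is structurally sound --- reduce to the three generators, reduce to the basis $\{\widehat{e_i}\}\cup\{\hat{\alpha}\}$, and grind out the action of $T$ --- but as written it is a proof of the easy two-thirds of the theorem together with a placeholder for the hard third. To close the gap you would need to (i) guess or derive the target identities above, and (ii) supply the quadratic-residue counting lemma (Perron) that justifies the term-by-term cancellation on each half.
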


When $p \equiv 3 \pmod 8$, we have shown that the code generated by $[G_Q|G_N]$ is the same as the code generated by $[I|G_Q]$.  Therefore $\hat{C}$ also entails a generating matrix as following.
\[
	\left[
		\begin{array}{c c c| c |c c c|c}
			& & & 1 & & & & 1 \\
			& I & & \vdots & & G_Q & &\vdots   \\
			& & & 1 & & & & 1 \\
			\hline
			1 & \hdots & 1 & 1 & 0 & \hdots & 0 & 1	
		\end{array}
	\right]
\]

\bigskip

This form has been extensively studied before, and is usually referred to as \emph{bordered double circulant codes}.  It has been shown that $PSL_2(p)$ acts on these codes using the generating matrices above in previous work, such as \cite{Gaborit2001} and \cite{Tjhai2006a}.  Our proof is an alternate to those when $p\equiv 3 \pmod 8$.  When $p \equiv 7 \pmod 8$, these two codes are not equivalent, and therefore this is a new result.

The calculations presented in this proof are inspired by the proof of the theorem that the automorphism group of the extended quadratic residue code contains $PSL_2(p)$.  One can check \cite{Macwilliams1977} for that.  It uses the following theorem from number theory.  

\begin{thm}[Perron]
	Let $p = 4k + 3$, and let $Q$ be the quadratic residues in $ \mathbb{F}_p$, $N$ be the quadratic non-residues in $ \mathbb{F}_p$.  $a \neq 0 \in \mathbb{F}_p$.
	\begin{itemize}
		\item If $a \in Q$, then $\{a + r|r\in Q\}$ contains $ k$ quadratic residues and $k+1$ quadratic non-residues.
		\item If $a \in Q$, then $\{a + s|s\in N\}$ contains $0$, $ k$ quadratic residues and $k$ quadratic non-residues.
		\item If $a \in N$, then $\{a + r|r\in Q\}$ contains $0$, $ k$ quadratic residues and $k$ quadratic non-residues.
		\item If $a \in N$, then $\{a + s|s\in N\}$ contains $k+1$ quadratic residues and $k$ quadratic non-residues.
	\end{itemize}
	\label{thm3}
\end{thm}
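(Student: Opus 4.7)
The plan is to establish Perron's theorem by a character-sum computation using the Legendre symbol $\chi(x)=\left(\tfrac{x}{p}\right)$, which equals $+1$ on $Q$, $-1$ on $N$, and $0$ at $0$. Each of the eight quantities in the theorem can be written as a sum of an indicator using the identities
\[
\sum_{r\in Q} f(r)=\tfrac{1}{2}\sum_{r\neq 0}(1+\chi(r))f(r),\qquad \sum_{s\in N} f(s)=\tfrac{1}{2}\sum_{r\neq 0}(1-\chi(r))f(r),
\]
so the problem reduces to evaluating just two sums, $\sum_{r\neq 0}\chi(a+r)$ and the twisted sum $\sum_{r\neq 0}\chi(r)\chi(a+r)$.

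The first sum equals $-\chi(a)$ by the orthogonality relation $\sum_{t\in\mathbb{F}_p}\chi(t)=0$. The second is a classical Jacobsthal-type sum: for $r\notin\{0,-a\}$, factor $\chi(r)\chi(a+r)=\chi(r^2(1+a/r))=\chi(1+a/r)$, then substitute $t=a/r$ to reduce it to $\sum_{t\neq 0,-1}\chi(1+t)=\sum_{u\neq 0,1}\chi(u)=-\chi(1)=-1$. Combining these yields $\sum_{r\in Q}\chi(a+r)=\tfrac{-\chi(a)-1}{2}$ and $\sum_{s\in N}\chi(a+s)=\tfrac{1-\chi(a)}{2}$.

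To finish, in each of the four cases I would solve two linear relations for the counts: $|Q|=|N|=2k+1$ equals (number of residues) $+$ (number of non-residues) $+$ (number of zeros) in $a+Q$ or $a+N$, while the character sum above equals (number of residues) $-$ (number of non-residues). The hypothesis $p=4k+3$ enters crucially through $\chi(-1)=-1$: it forces $-a\in N$ when $a\in Q$ and $-a\in Q$ when $a\in N$, which pinpoints when $0$ actually appears in $\{a+r:r\in Q\}$ or $\{a+s:s\in N\}$. The main point of care is this bookkeeping at the index $r=-a$, where $\chi(a+r)=0$ and so the zero must be recorded separately rather than absorbed into the character sum; once the parity of $p$ is used to place $-a$ in $Q$ or $N$, the quoted counts $k$ and $k+1$ fall out directly from the two linear relations.
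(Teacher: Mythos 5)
Your argument is correct and complete. Note that the paper itself gives no proof of this statement: Perron's theorem is quoted as a classical result from number theory (with a pointer to MacWilliams--Sloane), so there is no in-paper argument to compare against; your write-up supplies a proof the authors omit. The two key evaluations check out: $\sum_{r\neq 0}\chi(a+r)=-\chi(a)$ since $a+r$ runs over $\mathbb{F}_p\setminus\{a\}$, and $\sum_{r\neq 0}\chi(r)\chi(a+r)=\sum_{u\neq 0,1}\chi(u)=-1$ via the substitution $t=a/r$ (with the term $r=-a$ contributing $0$ and so harmlessly absorbed). Your two linear relations then give, e.g., for $a\in Q$ and the set $a+Q$: difference $R-N=\tfrac{-\chi(a)-1}{2}=-1$, no zero term since $\chi(-1)=-1$ forces $-a\in N$, hence $R+N=2k+1$ and $(R,N)=(k,k+1)$; the other three cases fall out the same way and match the quoted counts. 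The one place where care is genuinely needed --- recording the zero at $r=-a$ separately rather than letting it vanish silently into the character sum, and using $p\equiv 3\pmod 4$ to decide which of the two sets contains it --- is exactly the place you flag, so the bookkeeping is sound.
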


\begin{proof}
	(of Theorem \ref{thm2})

	Let $p = 4k + 3$.

	$S$ sends position $i$ to $i + 1 (i=0, \cdots, p-1)$ and sends $\infty$ to $\infty$.  Since $C$ is double-circulant, it's fixed by $S$.  $\hat{\alpha}$ is also fixed by $S$.  

	$V$ fixes $C$ since it fixes both $r_Q$ and $r_N$.  The $\infty$ positions are sent to themselves.  $V$ also fixes $\hat{\alpha}$.

	Therefore, what's left to show is that $T$ also fixes $\hat{C}$.	

	We will show that by proving the following:
	\begin{itemize}
		\item $\widehat{e_0}^T = \mathfrak{1} + \widehat{e_0}$
		\item If $i \in Q$, $\widehat{e_i}^T = \hat{\beta} + \widehat{e_0} + \widehat{e_{\text{-}\frac{1}{i}}} $
		\item If $i \in N$, $\widehat{e_i}^T = \hat{\alpha} + \widehat{e_0} + \widehat{e_{\text{-}\frac{1}{i} }}$
	\end{itemize}

	\begin{itemize}

		\item $ \widehat{e_0}^T = \mathfrak{1} + \widehat{e_0}$:  If $y$ is a quadratic residue, then $ -\frac{1}{y}$ is a quadratic non-residue, and vice versa.  It follows immediately that the equality is true for all positions that are neither $0$ or $\infty$.  It's easy to check the equality also follows through at $0$ and $\infty$.

		\item If $i \in Q$, we will prove 
			\[
				\widehat{e_i}^T + \widehat{e_{\text{-} \frac{1}{i}}} = \hat{\beta} + \widehat{e_0}
			\]
			instead.

			Focus on the left $p+1$ bits first, and consider 
			\begin{equation}
				T( \sum_{r \in Q} x^{i + r}|1 ) + (\sum_{ r \in Q} x^{r - \frac{1}{i}}|1)
			\label{eqn2}
		\end{equation}

		According to Theorem \ref{thm3}, $\{ i+r | r\in Q\}$ has $k+1$ quadratic non-residues, $k$ quadratic residues.  Therefore $ - \frac{1}{i+r}$ has $k+1$ quadratic residues, and $k$ quadratic non-residues.  

		$\{ r - \frac{1}{i}| r \in Q\}$ contains $0$, $k$ quadratic residues, and $k$ quadratic non-residues.

		We want to know whether any terms in $T(\sum_{r \in Q} x^{i+r})$ would cancel with terms in $\sum_{ r \in Q} x^{ r - \frac{1}{i}}$.  In this case, they need to have the same powers of $x$.

		If $ -\frac{1}{i+r} = r' - \frac{1}{i}$, then 
		\begin{equation}
			r' = \frac{1}{i}-\frac{1}{i+r} = r \cdot (-\frac{1}{i+r} )\cdot (-\frac{1}{i})
			\label{eqn1}
		\end{equation}

		\begin{enumerate}
			\item If $-\frac{1}{i+r}$ is a quadratic residue, then (\ref{eqn1}) $\in N$.  Therefore there does not exist $r' \in Q$, s.t. $ - \frac{1}{i+r} = r' - \frac{1}{i}$.  Since there are $2k+1$ terms in (\ref{eqn1}) with quadratic residue powers of $x$, all terms with quadratic residue powers will show up in the sum.
			\item If $ - \frac{1}{i+r}$ is a quadratic non-residue, then there exists $r' \in Q$ satisfying $ - \frac{1}{i+r} = r' - \frac{1}{i}$.  Since there are $k$ quadratic non-residues in both $\{ i+r | r \in Q \}$ and $\{ r - \frac{1}{i}| r \in Q\}$, they will cancel in pairs.  None of the terms with quadratic non-residue powers of $x$ will show up in the sum.
		\end{enumerate}

		Lastly, check the $0$ and $\infty$ positions separately. 

		Since $T( \sum_{r \in Q} x^{ i+r}|1)$ has $1$ at position $0$, and so does $(\sum_{ r \in Q}  x^{ r- \frac{1}{i}}| 1 )$, they will cancel in the sum.

	$ T (\sum_{ r \in Q} x^{ i + r} | 1)$ has $0$ at $ \infty$, therefore the sum has $1$ at $\infty$.

	We conclude that
	\[
		T(\sum_{ r \in Q} x^{i+r}|1) + ( \sum_{ r \in Q} x^{ r - \frac{1}{i}}|1) = (\sum_{ r \in Q }x^r | 1)
	\]

	Now for the right $p+1$ bits, consider 
	\begin{equation}
		T( \sum_{ s \in N} x^{i+s} |1) + (\sum_{s \in N} x^{s - \frac{1}{i}}|1)
		\label{eqn4}
	\end{equation}		

	$\{ s+i | s \in n\}$ contains $0$, $k$ quadratic residues, $k$ quadratic non-residues.  Therefore $\{ - \frac{1}{s+i}\}$ contains $k$ residues and $k$ non-residues and $\infty$.

	$ \{ s - \frac{1}{i}\}$ contains $k+1$ quadratic residues, $k$ quadratic non-residues.  

	If $ - \frac{1}{i+s} = s'-\frac{1}{i}$, then 

	\begin{equation}
		s' = -\frac{1}{i+s} + \frac{1}{i} = s \cdot (-\frac{1}{i+s}) \cdot (-\frac{1}{i}) 
		\label{eqn5}
	\end{equation}

	\begin{enumerate}
		\item If $ - \frac{1}{i+s} \in Q$, (\ref{eqn5}) $\in Q$, there does not exist $s' \in N$ such that the two terms cancel.  Since there are in total $2k +1$ terms in (\ref{eqn4}) with quadratic residue powers, all terms with quadratic residue powers will show up in the sum.
		\item If $ - \frac{1}{i+s} \in N$, there exists $s' \in N$, such that $ - \frac{1}{i+s}=s'-\frac{1}{i}$.  Since there are $k$ quadratic non-residues in both $\{ i + s|s \in N\}$ and $\{ s - \frac{1}{i}\}$, they will cancel in pairs.  None of the terms with quadratic non-residue powers of $x$ will show up in the sum.
	\end{enumerate}

	Just like before, we can check (\ref{eqn4}) has $1$ at position $0$ and $0$ at $\infty$.

	We conclude that
	\[
		T(\sum_{s \in N} x^{i + s}|1) + (\sum_{ s \in N} x^{s - \frac{1}{i}}|1) = (\sum_{ r \in Q} x^r|0) 
	\]

	Combining these two parts, we have 
	\[
		\widehat{e_i}^T + \widehat{e_{\text{-}\frac{1}{i}}}= \widehat{ e_0} + \hat{\beta} 
	\]

\item The case of $i \in N$ can be proved in a similar fashion.

\end{itemize}

Lastly, $ \hat{\alpha}^T = \hat{\alpha}$.  

Since $T$ sends all basis elements into $\hat{C}$, $T$ fixes C.  
\end{proof}

In the same way that the result on automorphism groups of extended quadratic residue codes reveals the relation between its minimum distance and that of its expurgated code, this result leads to the following theorem on the minimum distance of the QQR code.

\begin{thm}
	Let $C$ be a QQR code.  The minimum distance of the shadow of $C$ is at least that of $C$ less 1.
\end{thm}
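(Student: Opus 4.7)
The plan is to exploit the $PSL_2(p)$-action on $\hat{C}$ established in Theorem \ref{thm2}, together with a pigeonhole argument on the extended coordinates. Since $C$ is self-dual, $C^{\perp} = C$ and the shadow is the single coset $S = C_0^{\perp} \setminus C = C + \alpha$. Pick a minimum-weight element $s = c + \alpha \in S$ and consider its extension $\hat{s} \in \hat{C}$. The first key step is to show that $wt(\hat{s}) = wt(s) + 1$. Writing $c = \sum_{i \in T} e_i$, each $e_i$ has $(p-1)/2$ ones in each half; since $p \equiv 3 \pmod 4$, $(p-1)/2$ is odd, so the left and right halves of $c$ both have weight $\equiv |T| \pmod 2$. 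Adding $\alpha$ flips only the left-half parity, so exactly one of the two parity bits of $\hat{s}$ equals $1$.

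Now I would argue by contradiction: assume $wt(s) < d - 1$, so $wt(\hat{s}) < d$. Partition the $2p+2$ coordinates of $\hat{s}$ into $p+1$ pairs indexed by $i \in \{0, 1, \ldots, p-1, \infty\}$, the $i$-th pair being position $i$ on the left together with position $i$ on the right. At most $wt(\hat{s}) < d \leq p + 1$ of these pairs contain a $1$ (using the Singleton bound $d \leq n - k + 1 = p+1$), so at least one pair $i_0$ is entirely zero. Because $PSL_2(p)$ is transitive on $\{0, 1, \ldots, p-1, \infty\}$ and acts by the same permutation on the two halves, there exists $g \in PSL_2(p)$ sending $i_0$ to $\infty$; then $g \cdot \hat{s} \in \hat{C}$ carries $0$'s in both the $\infty$-left and $\infty$-right positions.

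Dropping those two parity coordinates from $g \cdot \hat{s}$ yields a codeword $c' \in C_0^{\perp}$ with even weight in each half, which forces $c' \in C$ (the other coset $C + \alpha$ has odd left-half parity). Since $s \neq 0$ implies $c' \neq 0$, and $wt(c') = wt(g \cdot \hat{s}) = wt(\hat{s}) < d$, this contradicts the minimum distance of $C$. I expect the main obstacle to be the parity bookkeeping in the first step: one must correctly track how the two half-weights behave mod $2$ under the $\alpha$-shift, and the argument relies crucially on $(p-1)/2$ being odd. Once that identity $wt(\hat{s}) = wt(s) + 1$ is secured, combining the transitivity of $PSL_2(p)$ from Theorem \ref{thm2} with the pigeonhole count on pairs finishes the proof cleanly, and Lemma \ref{lem2} then delivers Theorem \ref{thm1}.
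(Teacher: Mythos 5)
Your proof is correct and follows essentially the same route as the paper's: locate a coordinate pair $(i_0,i_0)$ that is zero in both halves by a pigeonhole count, use the transitivity of the $PSL_2(p)$-action from Theorem \ref{thm2} to move that pair to the two $\infty$ positions, and strip the parity checks to land in $C$ and force the weight inequality. The only differences are cosmetic: you argue by contradiction from a minimum-weight shadow element and make explicit the parity bookkeeping showing $wt(\hat{s}) = wt(s)+1$ (via $(p-1)/2$ odd), whereas the paper cases on the parity of $d(C_0^\perp)$ and absorbs that step into a WLOG.
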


\begin{proof}

	If $d({C_0^\perp})$ is even, then since $C$ is the even weight subcode of $C_0^\perp$, we must have $d(C_0^\perp) = d(C)$.  Therefore $d(C_0^\perp \backslash C) > d(C)$.

	Let $d({C_0^\perp})$ be odd.  Let $c$ be a codeword in ${C_0^\perp}$ that achieves the minimum distance.  WLOG, assume $c$ has an odd number of non-zero elements in the first $p$ bits; then $c$ has an even number of non-zero elements in the last $p$ bits.  $\hat{c}$ is in the form

	\[
		( * \cdots * |1|* \cdots * |0|)
	\]

	We claim that we can find a position $y(0 \le y \le p-1)$, such that the coordinate on position $y$ from the left $p+1$ bits is $0$, and the coordinate on position $y$ from the right $p+1$ bits is also $0$.  Otherwise, for each position $y$, at least one of the coordinates is $1$, and so $wt(c) \ge p$, which contradicts the fact that $c$ has minimum weight.  So $c$ is in the  following form:

	\[
		(* \cdots 0 \cdots * |1| * \cdots 0 \cdots * |0|)
	\]

	Since $PSL_2(p)$ acts transitively on $\hat{C}$, we can find an element in $PSL_2(p)$ that exchanges $y$ and $\infty$.  Recall that $PSL_2(p)$ acts on the left half and the right half of a codeword in the same fashion.  We would therefore obtain a new codeword in $\hat{C}$ in the following form:

	\[
		(* \cdots 1 \cdots * |0| * \cdots 0 \cdots * |0|)
	\]

	By losing the two parity checks, we obtain a new codeword in $C_0^\perp$ that has weight $d(C_0^\perp) + 1$.  Note that this codeword also belongs to $C$, and therefore 
	\[
		d(C) \le d(C_0^\perp) + 1
	\]

	Equivalently,
	\[
		d(C_0^\perp\backslash C) =d(C_0^\perp) \ge d(C) -1
	\]

\end{proof}

Combining this with Lemma \ref{lem2}, we have provided a proof for Theorem \ref{thm1}.

\subsection{Computation algorithms for QQR codes}
\label{sub:computation_algorithms}

Computation of the weight polynomials is always an important topic in coding theory.  Researchers have come up with clever enumeration methods to reduce the computation load and speed up the process.  However in general, little was known about the structure of the weight polynomials, and therefore good tests for  computational results were missing. 

Theorem \ref{thm1} imposes a strong condition on the weight polynomials of QQR codes, and could serve as a test for existing and future computational results on the weight polynomials of QQR codes.  On the other hand, we can also use this to derive an algorithm around this and dramatically reduce the amount of computation needed.   

Since QQR codes are self-dual, by Gleason's theorem, their weight polynomials can be written as linear combinations of $G(x,y)^iJ(x,y)^j$, with $2i + 8j = 2p$.

Now for a QQR code $C$, let
\[
	A_{C} = \sum_{k = 0}^{2p} A_k x^{2p-k}y^k
\]

We should have 
\begin{equation}
	\sum_{k = 0}^{2p} A_i x^{2p-k}y^k = \sum_{2i + 8j = 2p} a_i G(x,y)^i J(x,y)^j
	\label{eqn13}
\end{equation}

We can use a recursive algorithm to recover the whole weight polynomial by knowing only a few $A_i$.

\begin{enumerate}
	\item  $A_0=1$ because of the 0 codeword.  Comparing coefficients of $x^{2p}$ on both sides of (\ref{eqn4}), we have $a_p = 1$.
	\item We obtain a new equation by subtracting $a_pG(x,y)^p$ on both sides of (\ref{eqn13})
		\begin{equation}
			W_C(x,y) - a_pG(x,y)^p = a_{p-4}G(x,y)^{p-4}J(x,y) + \ldots
			\label{eqn14}
		\end{equation}
		with the highest power of $x$ being $x^{2p-2}y^2$.
	\item $A_2 = 0$ because $d>2$.  Compare coefficients of $x^{2p-2}y^2$ on both sides of (\ref{eqn14}), and we have $a_{p-4}=-p$.
	\item  Repeat the steps until we have all the $a_i$'s.
\end{enumerate}

Since $A_C$ is divisible by $(x^2 + y^2)^{d-1}$, we only need $a_i$ for $i \geq d-1$. 

For the case of $p=59$, computing the whole weight enumerator using MAGMA using brutal force would take 190 years.  Using this strategy, however, we need only a few $A_i$'s.  Assume the minimum distance is at least 14, which is reasonable based on the result for $p=43$.  This can also be confirmed computationally.  Then the weight polynomial is divisible by $(x^2 + y^2)^{15}$. Therefore to get all the $a_i(i=15, \cdots, 59)$, we only need $A_j$'s for $j= 14, 16, 18, 20, 22$, which takes a few hours to compute.  Note that this time is based on using existing commands in MAGMA, and could be even faster if combined with enumeration techniques.

The huge speed up is because not all coefficients are created equal.  The ones we needed for our computation are those $A_i$'s with very small or very large $i$.  These take much less time than those ones in the middle.  We are avoiding, and computing using our algorithm, those coefficients in the middle that could take years to compute.

\subsection{Zeta polynomials and Riemann hypothesis}
\label{sub:zeta_polynomials_and_riemann_hypothesis}

In the last part of this section, we answer a question that was originally posted by Joyner in \cite{Joyner2008}.

Let $d$ be the minimum distance of a code $C$ and $d^\perp$ the minimum distance of its dual code.  Iwan Duursma introduced the zeta function $Z=Z_C$ associated to a linear code $C$ over a finite field $\mathbb{F}_q$ \cite{Duursma2001}.

\[
	Z(T) = \frac{P_C(T)}{(1-T)(1-qT)}
\]
where $P_C(T)$ is a polynomial of degree $n+2-d-d^\bot$.  This is a polynomial with rational coefficients, called the zeta polynomial of the code $C$.

Given a self-dual code, it is always of interest whether its zeta polynomial satisfies the Riemann hypothesis.  (In other words, its roots occur in self-reciprocal pairs).  Joyner asked this question about the QQR codes for $p \equiv 3 \pmod 4$.  Using SAGE to compute zeta polynomials, we found that it does not satisfy the Riemann hypothesis for $p=23$.  

For $p=23$, it has 15 pairs of complex conjugate roots of absolute value $\frac{1}{\sqrt{2}}$, together with real roots 0.508887881 and 0.982534697.  The last two roots cause the code to fail the Riemann hypothesis.

\subsection{Quadratic residue codes}
\label{sub:quadratic_residue_codes}

As mentioned earlier, when $p \equiv 7 \pmod 8$, QQR codes have a close relation with quadratic residue codes.  In this section, we will first prove a similar divisibility property of quadratic residue codes using shadows of codes, and use their relation with QQR codes to give an alternative proof to Theorem \ref{thm1}.

We first introduce expurgated quadratic residue codes.

\begin{defn}[Expurgated quadratic residue code]
	Let $Q$ and $N$ be the quadratic residue codes associated with $p$.  

	The even subcodes of $Q$ and $N$, which are denoted as $\bar{Q}$ and $\bar{N}$ respectively, are called \emph{expurgated quadratic residue codes}.

\end{defn}

We list some well-known properties of quadratic residue codes that will be used later.  All can be found in \cite{Macwilliams1977}.

\begin{prop}[\cite{Macwilliams1977}] 
	Let $p\equiv \pm 1 \pmod 8$, then
	\hfill
	\begin{enumerate}
		\item $Q$ and $N$ both have dimension $ \frac{1}{2}(p+1)$.  $\bar{Q}$ and $\bar{N}$ have dimension $ \frac{1}{2}(p-1)$.
		\item If $p \equiv 3 \pmod 4$, $Q^\perp = \bar{Q}$, $N^\perp = \bar{N}$.  If $p \equiv 1 \pmod 4$, $Q^\perp = \bar{N}$, $N^\perp=\bar{Q}$.
		\item $Q$ is generated by $\bar{Q}$ and the all one codeword, $N$ is generated by $\bar{N}$ and the all one codeword.
		\item $\bar{Q}$ and $\bar{N}$ are doubly-even.
		\item Let $d$ be the minimum distance.  If $p \equiv -1 \pmod 8$, $d \equiv 3 \pmod 4$.  If $p \equiv 1 \pmod 8$, then $d$ is odd.
	\end{enumerate}		
	\label{prop7}
\end{prop}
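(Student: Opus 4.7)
The unifying idea is to work inside the ring $R = \mathbb{F}_2[x]/(x^p-1)$ and to use that, because $p \equiv \pm 1 \pmod 8$, the element $2$ is a quadratic residue modulo $p$, so the subgroup of $\mathbb{F}_p^\times$ generated by $2$ lies entirely in $Q$. The $2$-cyclotomic cosets modulo $p$ are therefore exactly $\{0\}$, $Q$ and $N$, so $x^p - 1 = (x-1)\,g_Q(x)\,g_N(x)$ in $\mathbb{F}_2[x]$ with $\deg g_Q = \deg g_N = (p-1)/2$. Every assertion in the proposition will flow from this factorization together with Perron's Theorem (Theorem \ref{thm3}) and the identity $r_Q + r_N + 1 = 1 + x + \cdots + x^{p-1}$ in $R$.

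For parts (1) and (3), I would use the correspondence between cyclic codes and divisors of $x^p-1$. A short calculation of $r_Q^2, r_N^2, r_Q r_N$ in $R$, all accessible via Perron's Theorem, identifies $Q = (r_Q)$ and $N = (r_N)$ with the two distinguished cyclic codes of dimension $(p+1)/2$ whose defining sets are governed by $p \pmod 4$. The expurgated codes $\bar Q, \bar N$ are cut out by the augmentation $c \mapsto \sum c_i$; since the all-ones codeword $\mathfrak{1}$ lies in $Q$ (a direct check from the ring structure under either parity of $p \pmod 4$) and has odd weight $p$, the augmentation is surjective and strips exactly one dimension, yielding part (1). Part (3) then follows immediately from $Q = \bar Q + \langle \mathfrak{1} \rangle$.

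Parts (2) and (4) are consequences of cyclic code duality and of Lemma \ref{lem1}. For (2), the dual of a cyclic code with generator polynomial $g$ and check polynomial $h = (x^p-1)/g$ is generated by the reciprocal of $h$; the involution $x \mapsto x^{-1}$ on $R$ sends $r_Q$ to $r_{-Q}$, and since $-1 \in Q$ exactly when $p \equiv 1 \pmod 4$, this involution either preserves or swaps $Q$ and $N$. Threading this through the duality recipe, together with the parity condition that distinguishes $Q$ from $\bar Q$, yields the four identifications. For (4), Lemma \ref{lem1} reduces doubly-evenness of $\bar Q$ to checking a generating set; I would take $\{x^i r_Q - r_Q : i \in \mathbb{F}_p^\times\}$, whose weights equal $2\,|Q \triangle (i + Q)|$. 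Perron's Theorem pins down $|Q \cap (i + Q)|$ depending only on whether $i \in Q$ or $i \in N$, and in either case the symmetric difference turns out to be even, so every generator has weight divisible by $4$.

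Part (5) is the most delicate, and is where I expect the principal obstacle. Parts (3) and (4) together show $Q = \bar Q \sqcup (\mathfrak{1} + \bar Q)$ and that weights in $Q$ therefore lie in at most two residue classes modulo $4$: weights in $\bar Q$ are $\equiv 0 \pmod 4$, while those in the odd coset are $p - \mathrm{wt}(c)$ with $c \in \bar Q$, hence $\equiv p \pmod 4$. For $p \equiv -1 \pmod 8$ the two residues are $0$ and $3$; for $p \equiv 1 \pmod 8$ they are $0$ and $1$. The refinement in the statement then reduces to showing that the minimum-weight codeword lives in the odd coset rather than in $\bar Q$. My plan is to invoke the square-root bound $d(d-1) \ge p-1$ for quadratic residue codes—which itself rests on the transitive $PSL_2(p)$-action on the extended code, precisely the kind of input used in Theorem \ref{thm2}—and compare this lower bound on $d(\bar Q)$ against the candidate weight $p - d(\bar Q)$ coming from the odd coset; the arithmetic forces the minimum to sit in the odd coset. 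The hard step is really this last comparison, since it fundamentally needs the automorphism-group input rather than any purely ring-theoretic argument in $R$.
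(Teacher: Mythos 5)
The paper does not actually prove this proposition --- it is quoted from MacWilliams--Sloane with a citation --- so your sketch has to stand on its own. Parts (1)--(3) follow the standard idempotent/generator-polynomial route and are sound in outline, with two caveats. First, the $2$-cyclotomic cosets are in general strictly finer than $\{0\}, Q, N$ (for $p=31$, $\langle 2\rangle$ has order $5$ while $|Q|=15$); what you need, and what is true, is only that $Q$ and $N$ are unions of cyclotomic cosets, so that $g_Q, g_N \in \mathbb{F}_2[x]$. Second, in part (4) the appeal to Lemma \ref{lem1} requires the generators to be pairwise orthogonal as well as of weight divisible by $4$ (since $wt(u+v) = wt(u)+wt(v)-2|u\cap v|$, the lemma is false without this), so you must also check via Perron that $(x^i r_Q + r_Q)\cdot(x^j r_Q + r_Q)=0$; this does hold when $p \equiv 7 \pmod 8$, and indeed the doubly-even claim is genuinely false for $p \equiv 1 \pmod 8$ (for $p=17$ the expurgated code contains weight-$6$ words), which is why part (5) asserts only oddness of $d$ in that case.

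The real gap is your plan for part (5). Having written $Q = \bar{Q} \sqcup (\mathfrak{1} + \bar{Q})$, you must show a minimum-weight codeword lies in the odd coset, but the comparison you propose cannot deliver this: the smallest weight occurring in the odd coset is $p$ minus the \emph{maximum} weight of $\bar{Q}$, whereas $p - d(\bar{Q})$ is a weight near $p$, i.e.\ one of the largest elements of the odd coset. The square-root bound controls $d(\bar{Q})$ from below but says nothing about the maximum weight of $\bar{Q}$, so there is no arithmetic to compare. The correct argument uses the transitive $PSL_2(p)$-action directly rather than through a numerical bound: if $c \in Q$ had even minimum weight $d$, its extension $\hat{c}$ carries a $0$ at $\infty$; an automorphism moving a support coordinate of $\hat{c}$ to $\infty$ gives a weight-$d$ codeword of the extended code with a $1$ at $\infty$, and puncturing at $\infty$ returns a nonzero codeword of $Q$ of weight $d-1$, a contradiction. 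This is exactly the argument the paper itself runs for the shadow of the QQR code in Section 3.2, so the right tool is already in front of you; once the minimum weight is known to be odd, your mod-$4$ bookkeeping correctly finishes both congruence cases.
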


Similar to Lemma \ref{lem4}, we can prove the following.

\begin{lem}
	\label{lem4}
	Let $C$ be a self-orthogonal binary code, and $l$ its maximum weight.  Then the weight polynomial of its shadow is divisible by $(x+y)^{n-l}$.
\end{lem}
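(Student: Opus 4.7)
The plan is to mirror the argument used for Lemma \ref{lem2}, but exploit the maximum weight instead of the minimum weight, converting a factor of $x^{n-l}$ in $A_C$ into a factor of $(x+y)^{n-l}$ in $S_C$ via the substitution in Lemma \ref{lem3}.

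First, I would observe that if $l$ is the maximum weight of $C$, then by definition the number of codewords of weight $k$ is zero for every $k > l$. Writing $A_C(x,y) = \sum_{k=0}^{n} A_k x^{n-k}y^k$, all terms with $k > l$ vanish, so
\[
A_C(x,y) = x^{n-l}\sum_{k=0}^{l} A_k x^{l-k}y^k.
\]
In other words, $x^{n-l}$ divides $A_C(x,y)$ as a polynomial.

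Next, I would apply Lemma \ref{lem3}, which gives
\[
S_C(x,y) = \frac{1}{|C|}A_C(x+y,\ i(x-y)),
\]
regardless of whether $C$ is singly-even or doubly-even. Substituting $x \mapsto x+y$ into the factorization above turns the factor $x^{n-l}$ in $A_C$ into $(x+y)^{n-l}$ in $S_C(x,y)$, and this factor is preserved after dividing by the nonzero constant $|C|$. Hence $(x+y)^{n-l}$ divides $S_C(x,y)$, which is the claim.

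There is really no obstacle here: once one notices the dual role played by the maximum weight (bounding the $y$-degree of $A_C$) versus the minimum weight of the shadow (bounding the $xy$-degree of $S_C$ in Lemma \ref{lem2}), the proof is a one-line substitution. The only subtlety worth flagging is that Lemma \ref{lem3} was proved in both the singly-even and doubly-even cases, so the conclusion applies to any self-orthogonal binary code $C$ without a parity distinction.
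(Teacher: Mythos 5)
Your proof is correct and follows essentially the same route as the paper's: note that the maximum weight $l$ forces $x^{n-l}$ to divide $A_C(x,y)$, then apply Lemma \ref{lem3} so that the substitution $x\mapsto x+y$ turns that factor into $(x+y)^{n-l}$ in $S_C$. The paper's own proof is exactly this two-line argument, so there is nothing to add.
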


\begin{proof}
	\[
		A_C = x^n + \cdots + x^{n-l}y^l
	\]
	So $A_C$ is divisible by $x^{n-l}$.

	From Lemma \ref{lem3}, it's immediate that $S_C$ is divisible by $(x+y)^{n-l}$.
\end{proof}

We can now prove a divisibility property on the weight polynomial of the quadratic residue code.

\begin{thm}
	Let $p \equiv \pm 1 \pmod 8$ be prime, and $C$ the quadratic residue code associated with $p$, then $A_C$ is divisible by $(x+y)^d$, where $d$ is its minimum distance.
	\label{thm4}
\end{thm}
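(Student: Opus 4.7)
The plan is to mirror the shadow-based argument used for QQR codes, but applied to the expurgated subcode $\bar{Q}$. The key observation is that, even though $Q$ itself is not self-dual, the expurgated code $\bar{Q}$ is self-orthogonal and doubly-even, so Lemma \ref{lem4} is available — and its shadow is essentially $Q$ itself.

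First I would take $C = Q$ without loss of generality, since $Q$ and $N$ are equivalent and hence have the same weight polynomial. By Proposition \ref{prop7}, $\bar{Q}$ is doubly-even, so it is self-orthogonal, and
\[
\bar{Q}^\perp \;=\; \begin{cases} Q & \text{if } p \equiv 7 \pmod 8,\\ N & \text{if } p \equiv 1 \pmod 8.\end{cases}
\]
Because $\bar{Q}$ is doubly-even, by definition its shadow is $\bar{Q}^\perp$, so $S_{\bar{Q}}(x,y) = A_{\bar{Q}^\perp}(x,y)$. In the first case this is literally $A_Q$; in the second case it is $A_N = A_Q$. Either way, $S_{\bar{Q}} = A_Q$.

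Next I would invoke Lemma \ref{lem4} with the self-orthogonal code $\bar{Q}$: if $l$ denotes the maximum weight of $\bar{Q}$, then $S_{\bar{Q}}$, and hence $A_Q$, is divisible by $(x+y)^{p-l}$. It therefore suffices to establish $l \le p - d$. For this I would use property (3) of Proposition \ref{prop7}: the all-one codeword $\mathbf{1}$ lies in $Q$. For any $c \in \bar{Q} \subseteq Q$, the sum $\mathbf{1}+c$ lies in $Q$ with $\mathrm{wt}(\mathbf{1}+c) = p - \mathrm{wt}(c)$. Since $\mathbf{1}$ has odd weight $p$ while $\bar{Q}$ is an even-weight code, $\mathbf{1} \notin \bar{Q}$, so $\mathbf{1} + c \ne 0$ for all $c \in \bar{Q}$. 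Hence $\mathrm{wt}(\mathbf{1}+c) \ge d$, i.e.\ $\mathrm{wt}(c) \le p - d$, so $l \le p - d$ as required.

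There is no serious obstacle here: the only subtle point is correctly identifying which of $Q$ or $N$ is the shadow of $\bar{Q}$ in each residue class mod $8$, which dissolves once one notes that $Q$ and $N$ share a weight polynomial. The weight-complement argument via $\mathbf{1}$ is the same trick used in the classical proof that extended quadratic residue codes have minimum distance one larger than their expurgated counterparts, and it is precisely what converts the minimum-distance bound on $Q$ into a maximum-weight bound on $\bar{Q}$.
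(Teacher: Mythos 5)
Your proposal is correct and follows essentially the same route as the paper: identify $Q$ (up to equivalence) as the shadow of the doubly-even, self-orthogonal expurgated code $\bar{Q}$, bound the maximum weight of $\bar{Q}$ by $p-d$ via translation by the all-one codeword, and apply Lemma \ref{lem4}. The only difference is that you carry out the $p\equiv 1 \pmod 8$ case explicitly (where the shadow is $N$ rather than $Q$), which the paper leaves as ``similar,'' and you state the maximum-weight bound $l \le p-d$ slightly more carefully than the paper's one-line assertion.
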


\begin{proof}
	We will only prove this for $p \equiv -1 \pmod 8$.  The case for $p \equiv 1 \pmod 8$ is similar.

	When $p \equiv -1 \pmod 8$, let $\bar{C}$ be the corresponding expurgated quadratic residue code.  Then $\bar{C}^\perp = C$ and $C$ is generated by $\bar{C}$ and the all one codeword.

	Since $p \equiv 1 \pmod 8$, $d \equiv 3 \pmod 4$.  Let $c$ be a codeword that achieves the minimum distance, and let $c'$ be the sum of $c$ and the all one codeword.  Then $c'$ has even weight $p-d$, and therefore is contained in $\bar{C}$.  Since $C$ has odd length $p$, $\bar{C}$ does not contained the all one codeword, and therefore $c'$ has the largest weight in $\bar{C}$.

	By definition, $C$ is the shadow of $\bar{C}$.  Therefore by Lemma \ref{lem4}, the weight polynomial of $\bar{C}$ is divisible by $(x+y)^d$.
\end{proof}

When $p\equiv 7 \pmod 8$, the QQR code $C$ is the even subcode of $Q \oplus N$.  Therefore we have the following relation between their weight polynomials.

\begin{prop}
	When $p \equiv 7 \pmod 8$, let $C$ be the QQR code associated with $p$, and $Q$ the corresponding quadratic residue code, then
	\[
		A_C = (A_Q^2(x,y) + A_Q^2(x,-y))/2
	\]
	\label{prop8}
\end{prop}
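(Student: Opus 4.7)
The plan is to combine Proposition \ref{prop9} with two elementary facts about weight enumerators. First, I would observe that since $Q$ and $N$ are equivalent codes via the permutation $y \mapsto \rho y$ (as noted in the definition of the quadratic residue code), they have identical weight distributions, so $A_N(x,y) = A_Q(x,y)$. Second, the weight polynomial of a direct sum factors as a product: writing out $(c_1, c_2) \in Q \oplus N$ and noting $\mathrm{wt}(c_1, c_2) = \mathrm{wt}(c_1) + \mathrm{wt}(c_2)$ gives
\[
A_{Q \oplus N}(x,y) \;=\; A_Q(x,y) \cdot A_N(x,y) \;=\; A_Q(x,y)^2.
\]

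Next I would invoke the standard even-subcode projection. For any binary code $D$ of length $n$ with $A_D(x,y) = \sum_{i} A_i\, x^{n-i} y^i$, the even-weight subcode $D^{\mathrm{ev}}$ has weight polynomial
\[
A_{D^{\mathrm{ev}}}(x,y) \;=\; \tfrac{1}{2}\bigl( A_D(x,y) + A_D(x,-y) \bigr),
\]
because replacing $y$ by $-y$ multiplies the term of weight $i$ by $(-1)^i$, and averaging annihilates the odd-weight contributions while preserving the even ones.

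Finally, by Proposition \ref{prop9}, when $p \equiv 7 \pmod 8$ the QQR code $C$ is precisely the even subcode of $Q \oplus N$. Applying the projection formula to $D = Q \oplus N$ and substituting $A_{Q \oplus N}(x,y) = A_Q(x,y)^2$ yields
\[
A_C(x,y) \;=\; \tfrac{1}{2}\bigl( A_Q(x,y)^2 + A_Q(x,-y)^2 \bigr),
\]
which is the claimed identity (read with the convention $A_Q^2(x,\pm y) = (A_Q(x,\pm y))^2$). There is no real obstacle here; the only subtlety to flag is that all three ingredients — equivalence of $Q$ and $N$, multiplicativity of weight enumerators under direct sum, and the even-subcode averaging trick — depend on the hypothesis $p \equiv 7 \pmod 8$ only through Proposition \ref{prop9}, which is where the identification of $C$ with the even subcode is made.
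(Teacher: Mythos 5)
Your proposal is correct and follows essentially the same route as the paper's proof: identify $C$ with the even subcode of $Q \oplus N$ via Proposition \ref{prop9}, use multiplicativity of weight enumerators under direct sums (together with $A_N = A_Q$), and extract the even-weight terms by averaging over $y \mapsto -y$. The paper states these steps more tersely, but the content is identical.
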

\begin{proof}

	The weight polynomial of the direct sum of two codes is the product of their respective weight polynomials, and the weight polynomial of the even weight subcode of a code is just the sum of terms in its weight polynomials with even powers of $y$.  

	This proposition is immediate after combining these two facts with Proposition \ref{prop9}.
\end{proof}

We now give an alternative proof to Theorem \ref{thm1} in the case $p \equiv 7 \pmod 8$ using this relation.

\begin{proof}
	When $p \equiv 7 \pmod 8$, let $C$ be the QQR code associated with $p$, and let $Q$ and $\bar{Q}$ be the corresponding quadratic residue code and the expurgated quadratic residue code respectively.

	Since $Q$ is generated by $\bar{Q}$ and the all one codeword,
	\[
		A_Q = A_{\bar{Q}}(x,y) + A_{\bar{Q}}(y,x)
	\]

	Note that the minimum distance of $Q$ is $d-1$, we have	\[
		(x + y)^{d-1} | A_{\bar{Q}}(x,y) + A_{\bar{Q}}(y,x)
	\]

	Change $y$ for $iy$ and $-iy$ respectively, where $i^2=-1$.  We obtain
	\begin{align*}
		(x + iy)^{d-1} &| A_{\bar{Q}}(x,iy) + A_{\bar{Q}}(iy,x) \\
		(x -i y)^{d-1} &| A_{\bar{Q}}(x,-iy) + A_{\bar{Q}}(-iy,x)
	\end{align*}

	Since $\bar{Q}$ is doubly-even, for each term in $A_Q$, powers of $y$ are all divisible by $4$ and powers of $x$ are $3 \pmod 4$.  

	Therefore
	\begin{align*}
		A_{\bar{Q}}(x,iy) = A_{\bar{Q}}(x,y) ,&\: A_{\bar{Q}}(iy,x) = -iA_{\bar{Q}}(y,x)\\
		A_{\bar{Q}}(x,-iy) = A_{\bar{Q}}(x,y),&\: A_{\bar{Q}}(-iy,x) = iA_{\bar{Q}}(y,x)
	\end{align*}

	Hence 
	\begin{align*}
		&(x + iy)^{d-1}|A_{\bar{Q}}(x,y) - iA_{\bar{Q}}(y,x)\\
		&(x -iy)^{d-1}|A_{\bar{Q}}(x,y) + i A_{\bar{Q}}(y,x)
	\end{align*}

	Lastly 
	\begin{align*}
		A_C &= (A_Q(x,y)^2 + A_Q(x,-y)^2)/2 \\
		&= ((A_{\bar{Q}}(x,y) + A_{\bar{Q}}(y,x))^2 + (A_{\bar{Q}}(x,-y) + A_{\bar{Q}}(-y,x))^2)/2 \\
		&=((A_{\bar{Q}}(x,y) + A_{\bar{Q}}(y,x))^2 + (A_{\bar{Q}}(x,y) - A_{\bar{Q}}(y,x))^2)/2 \\
		&=A_{\bar{Q}}(x,y)^2 + A_{\bar{Q}}(y,x)^2 \\
		&= (A_{\bar{Q}}(x,y) + i A_{\bar{Q}}(y,x))(A_{\bar{Q}}(x,y) - i A_{\bar{Q}}(y,x))
	\end{align*}

	which is divisible by $(x^2 + y^2)^{d-1}$.
\end{proof}

\subsection{Weight polynomials of quadratic residue codes in the literature}
\label{sub:weight_polynomial_corrections_for_quadratic_residue_codes_when_p_113_127_}

Previously, weight polynomials of quadratic residue codes have been computed up to $p= 167$.  We are referring to the online table \emph{Weight Distributions of Quadratic Residue and Quadratic Double Circulant Codes over GF(2)}\cite{Tjhai}.  This table is also the source for the same entries on \emph{The On-Line Encyclopedia of Integer Sequences} (OEIS)\cite{Sloane}.

We tested these results against Theorem \ref{thm4}.  The results are shown in Table \ref{tab2}.

\begin{table}[h]
	\centering
	\begin{tabular}{c c c c}
		$p$ & $k$ & $d$ & Divisible by  \\
		\hline
		89 & 45  & 17 & $(x+y)^{17}$ \\
		97 & 49  & 15 & $(x+y)^{15}$ \\
		103 & 52 & 19 & $(x+y)^{19}$ \\
		113 & 57 & 15 & $\color{red}(x+y)$ \\
		127 & 64 & 19 & $\color{red}(x+y)$ \\
		137 & 69 & 21 & $\color{red}(x+y)$ \\
		151 & 76 & 19 & $\color{red}(x+y) $ \\
		167 & 84 & 23 & $\color{red}(x+y)$ \\
		\hline
	\end{tabular}
	\vspace{0.1 in}
	\caption{Weight polynomials posted on \cite{Tjhai}}
	\label{tab2}
\end{table}

The weight polynomials posted for $p=113,127,137,151,167$ are only divisible by $x+y$ and no further, and therefore errors existed in these results.  We investigated each case and give the results as follows.

\subsubsection{$p=137, 151,167$}
\label{ssub:p_137_151_167_}

For $p=137,151,167$, we found that the numbers from the original references of the online table are different from the numbers posted in the online table.  

In particular, for $p=137$, the numbers in the paper \cite{Tjhai2008} are different from the online table.  

For $p= 151, 167$, the numbers in \cite{Tomlinson2017} are different from the online table.  

We tested Theorem \ref{thm4} against the numbers in these references and confirmed they satisfy the divisibility conditions.  See Table \ref{tab4}.

\begin{table}[h]
	\centering
	\begin{tabular}{c c c c}
		$p$ & $k$ & $d$ & Divisible by  \\
		\hline
		137 & 69 & 21 & $(x+y)^{21}$ \\
		151 & 76 & 19 & $(x+y)^{19}$ \\
		167 & 84 & 23 & $(x+y)^{23}$ \\
		\hline
	\end{tabular}
	\vspace{0.1 in}
	\caption{Weight polynomials in references}
	\label{tab4} 
\end{table}

They also satisfy the following checks:
\begin{enumerate}
	\item All the $A_i$'s are divisible by $p$, except for $A_0$ and $A_p$.  (This should hold because quadratic residue codes are cyclic.)
	\item $\sum_0^p A_i = 2^k$.
	\item $A(x,y)$ is divisible by $(x+y)^d$.
	\item The corresponding weight polynomial for the extended quadratic residue codes satisfy the MacWilliams identity.  (This should hold because extended quadratic residue codes are self-dual).  In other words, 
		\[
			f(x,y) = x(A(x,y) + A(x,-y))/2 + y(A(x,y) - A(x,-y))/2
		\]
		should satisfy
		\[
			f(x,y) = \frac{1}{2^{k}}f(x+y,x-y)
		\]
\end{enumerate}

Since the divisibility condition and the four checks are highly non-trivial, we believe the original references are correct, and the numbers in the online tables are off possibly due to rounding using double-precision floating-point format\cite{Contributors}.

\subsubsection{Correction for $p=113$}
\label{ssub:correction_for_p_113_}

We could not find a reference for these numbers, but were able to find the correct weight polynomial in this case.

In fact, all the numbers in the online table are correct except $A_{56}$ and $A_{57}$ should be changed from $10375431209297308$ to $10375431209297309$.  The resulted weight polynomial satisfy Theorem \ref{thm4} and the four checks we listed above.

\subsubsection{Correction for $p=127$}  
\label{ssub:correction_for_p_127_}

We could not find a reference for $p=127$ either.  Therefore we deduced the correct weight polynomial based on the criteria it needs to satisfy.

By Theorem \ref{thm4}, the weight polynomial $A_Q$ is divisible by $(x+y)^{19}$, therefore we have 
\begin{equation}
	(x+y)^{19} (\sum^{108}_{i=1} c_i x^{108-i}y^i) = \sum^{127}_{j=1} A_i x^{127-j} y^j
	\label{eqn12}
\end{equation}
for some integers $c_j$'s.  Our goal is to solve these $c_j$'s.

Expand Equation \ref{eqn12} and compare coefficients, we have 
\[
	\sum_{i+k=j} {19 \choose k} c_i = A_j
\]

Therefore if the number of correct $A_j$'s we know are greater than or equal to $108$, we can set up enough equations to solve all the $c_i$'s.  Below are the $A_j$'s we use.

\begin{itemize}
	\item Since $d = 19$, we know $A_0=A_{108}=1$ and $A_i=0(i=1,\cdots,18,109, \cdots 126)$.   
	\item $A_i=0$ when $i \equiv 1,2 \pmod 4$\cite{Macwilliams1977}.  
	\item We use $A_{19}$ to $A_{43}$ (and $A_{84}$ to $A_{108}$) from the posted result, hoping they were correct.  Therefore we used 13 coefficients from the table. 
\end{itemize}

Fortunately the 13 coefficients we use from the table seem correct.  The weight polynomial we obtained using this approach passes the four checks we listed above.  Therefore we are confident the answer is correct.

Below are the $A_j$'s that needed to be corrected.  Since the weight polynomials for quadratic residue codes are symmetric, we only listed $A_j$'s for $j \le 63$.

\begin{table}[h]
	\label{tab3}
	\centering
	\begin{tabular}{c r r }
		$i$ & $A_i$ in table & $A_i$ corrected\\
		\hline
		51 & 223367511592873280 & 223367511592873284 \\
		52 & 326460209251122496 & 326460209251122492 \\
		55 & 840260234424082176 & 840260234424082220 \\
		56 & 1080334587116677120 & 1080334587116677140 \\
		59 & 1899366974583683328 & 1899366974583683220 \\
	       	60 & 2152615904528174336 & 2152615904528174316 \\
		63 & 2596788489999036416 & 2596788489999036307 \\
		\hline
	\end{tabular}
	\vspace{0.1 in}
	\caption{Correction for $p=127$}
\end{table}

\section{Weight Distribution and Hyperelliptic Curves}
\label{sec:weight_distribution_and_hyperelliptic_curves}

The weights of QQR codes are closely linked with numbers of points on corresponding hyperelliptic curves.  This connection enables us to study the distribution of number of points on hyperelliptic curves using the weight distribution of QQR codes.  

In this section, we will first show a result on the weight distribution of QQR codes, and then demonstrate how to use this result to prove a corresponding result on hyperelliptic curves.

Let $S \subseteq \mathbb{F}_p$, and let $c=(r_Qr_S,r_Nr_S)$.  According to Proposition \ref{prop-Joyner}, we have 
\begin{itemize}
	\item If $|S|$ is even, then 
		\[
		wt(c) = 2p-|X_S( \mathbb{F}_p)|
	\]
\item If $|S|$ is odd, then
	\[
	wt(c) = |X_{S^c} (\mathbb{F}_p)|
\]	
\end{itemize}

\begin{rmk}
	The original statement posted in Joyner's paper is slightly different, since his count includes points at infinity.  For simplicity, we modified the statement to restrict only to affine points.
\end{rmk}

In order to link the weight distribution of the QQR codes and the point distributions of hyperelliptic curves, we also need the following results:

\begin{prop}
	Let $S \subseteq \mathbb{F}_p$. 
	\begin{itemize}
		\item If $|S|$ is even, then $|X_S( \mathbb{F}_p)| \equiv 2 \pmod 4$.
		\item If $|S|$ is odd, then $|X_S( \mathbb{F}_p)| \equiv 3 \pmod 4$.
	\end{itemize}	
	\label{prop3}
\end{prop}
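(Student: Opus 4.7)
The plan is to split on the parity of $|S|$ and treat each half by a different tool. For $|S|$ even the structure of the QQR code $C$ does almost all the work: consider the codeword $c_S := \sum_{s \in S} e_s = (r_Q r_S, r_N r_S) \in C$. Expanding a generic $\mathbb{F}_2$-combination of the generators from Proposition \ref{prop1} gives
\[
\sum_{i \in A}(\mathfrak{1} - e_i) \;=\; |A|\,\mathfrak{1} + \sum_{i \in A} e_i \;=\; \begin{cases} c_A & \text{if } |A| \text{ is even},\\ c_{A^c} & \text{if } |A| \text{ is odd}, \end{cases}
\]
and since $p$ is odd, $|A|$ odd forces $|A^c|$ even, so $C_0 = \{\,c_S : |S| \text{ even}\,\}$. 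Hence $wt(c_S) \equiv 0 \pmod 4$ whenever $|S|$ is even; combined with Proposition \ref{prop-Joyner}(1) this yields $|X_S(\mathbb{F}_p)| = 2p - wt(c_S) \equiv 2p \equiv 2 \pmod 4$, using $p \equiv 3 \pmod 4$.

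For $|S|$ odd, first dispose of the degenerate case $S = \mathbb{F}_p$: then $f_S(x) = x^p - x$ vanishes identically on $\mathbb{F}_p$, so $|X_S(\mathbb{F}_p)| = p \equiv 3 \pmod 4$. Otherwise pick any $a \in \mathbb{F}_p \setminus S$ and set $T := S \cup \{a\}$. Since $|T|$ is even, the even case gives $|X_T(\mathbb{F}_p)| \equiv 2 \pmod 4$, so it suffices to show $|X_T(\mathbb{F}_p)| - |X_S(\mathbb{F}_p)| \equiv -1 \pmod 4$. Partitioning $\mathbb{F}_p$ into the three regimes $x = a$, $x \in S$, and $x \notin T$, and using that the number of $y \in \mathbb{F}_p$ solving $y^2 = c$ equals $1 + \chi(c)$ for $c \neq 0$ and $1$ for $c = 0$, the contributions telescope to
\[
|X_T(\mathbb{F}_p)| - |X_S(\mathbb{F}_p)| \;=\; -\chi(f_S(a)) \;-\; 2\sum_{\substack{x \notin T \\ \chi(x-a)=-1}} \chi(f_S(x)),
\]
where $\chi$ is the Legendre symbol on $\mathbb{F}_p$. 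Writing $\nu := \#\{\,s \in S : \chi(s-a) = -1\,\}$, the identity $\chi(-1) = -1$ (which holds because $p \equiv 3 \pmod 4$) together with $|S|$ odd gives $\chi(f_S(a)) = (-1)^{\,\nu+1}$, while the inner sum has $(p-1)/2 - \nu$ terms each equal to $\pm 1$, so modulo $2$ its value has the same parity as $1 + \nu$. A short case-split on $\nu \pmod 2$ then shows the difference is $\equiv 3 \pmod 4$ in both subcases, and hence $|X_S(\mathbb{F}_p)| \equiv 3 \pmod 4$.

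The main obstacle is the mod-$4$ bookkeeping in the odd case. Each ingredient is elementary on its own (the point-count split, the Legendre-symbol identity for $\chi(f_S(a))$ in terms of $\nu$, and the parity count of the residual character sum), but the punchline requires the $\pm 1$ contribution $-\chi(f_S(a))$ and the $\equiv 2 \pmod 4$ contribution $-2E$ to combine to exactly $3 \pmod 4$ in both of the subcases $\nu$ even and $\nu$ odd. A sign or congruence slip anywhere in the Legendre manipulation would make the two subcases disagree, so this short parity case-split is the step most vulnerable to error and the one I would want to verify side-by-side in detail.
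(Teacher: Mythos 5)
Your plan is correct, but it reaches the result by a genuinely different route than the paper. The paper works entirely with the character sum $\sum_{a}\chi(f_S(a)) = |X_S(\mathbb{F}_p)|-p$ and proves the mod-$4$ statement for that sum by induction on $|S|$, peeling off one root at a time via a recursion relating $S$ to $S\setminus\{s\}$; both parities are handled inside the same induction. You instead split the two parities between two tools. For $|S|$ even you observe that, by Proposition \ref{prop1}, the doubly-even subcode $C_0$ is exactly $\{(r_Qr_S,r_Nr_S) : |S| \text{ even}\}$ (your expansion $\sum_{i\in A}(\mathfrak{1}-e_i)=|A|\mathfrak{1}+\sum_{i\in A}e_i$ together with $\mathfrak{1}=\sum_i e_i$ and $p$ odd is right), so $wt(c_S)\equiv 0 \pmod 4$ and Joyner's formula gives $|X_S(\mathbb{F}_p)|=2p-wt(c_S)\equiv 2 \pmod 4$ with no character-sum work at all. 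For $|S|$ odd you do a single augmentation step $T=S\cup\{a\}$ rather than a full induction; I checked your telescoped difference $-\chi(f_S(a))-2E$, the identity $\chi(f_S(a))=(-1)^{\nu+1}$ (using $\chi(-1)=-1$ and $|S|$ odd), the count $E\equiv (p-1)/2-\nu\equiv 1+\nu \pmod 2$, and both subcases of the parity split land on $-1\pmod 4$, so the bookkeeping you flagged as the fragile step does close. There is no circularity: Proposition \ref{prop1} and Lemma \ref{lem1} are established independently of Proposition \ref{prop3}. The trade-off is that your argument is shorter and exposes a nice structural fact (the identification of $C_0$ with the even-$|S|$ codewords), but it leans on Joyner's weight formula and the code-theoretic setup, whereas the paper's induction is self-contained in the language of character sums and yields the intermediate congruence for $\sum_a\chi(f_S(a))$ itself. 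Do remember to keep the degenerate case $S=\mathbb{F}_p$ as you did, since no $a\notin S$ exists there.
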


We give a sketch of the proof as following:
\begin{proof}(sketch)
	Let $\chi = ( \frac{}{p})$ be the quadratic residue character, which is 1 on the quadratic residues $Q \in \mathbb{F}_p$, -1 on the quadratic non-residues, and 0 on 0.  

	Then
	\[
		|X_S( \mathbb{F}_p)| = p + \sum_{ a \in \mathbb{F}_p} \chi(f_S(a))
	\]

	Since $ p \equiv 3 \pmod 4$, we just need to show the following:

	\begin{itemize}
		\item If $|S|$ is even, $\sum_{ a \in \mathbb{F}_p} \chi(f_S(a)) \equiv 3 \pmod 4$.
		\item If $|S|$ is odd, $\sum_{ a \in \mathbb{F}_p} \chi(f_S(a)) \equiv 0 \pmod 4$.
	\end{itemize}

	These are proven by induction on $|S|$, as follows:

	\begin{enumerate}
		\item 	Consider the simplest case.  If $S = \{r \}$, then
			\[
				\sum_{ a \in \mathbb{F}_p} \chi( f_S(a)) = \sum_{ a \in \mathbb{F}_p} \chi( a - r) = \sum_{ a \in \mathbb{F}_p} \chi(a) = 0
			\]
		\item If $|S| > 1$, then take $s \in S$, and let $R = S \backslash \{s\}$.  We can show that 	
		\begin{equation}
			\sum_{ a \in \mathbb{F}_p} \chi(f_S(a))\equiv  - p + |S| -   \sum_{a \in \mathbb{F}_p} \chi(f_R(a)) + \chi(f_R(s)) + \sum_{ a \in R} \chi(a-s) \pmod 4
			\label{eqn11}
		\end{equation}

	\item In particular, when $|S|=2$, by (\ref{eqn11}), we have \[
			\sum_{a \in \mathbb{F}_p} \chi(f_S(a)) \equiv 3 \pmod 4
		\]

		\item Assume for $|S| < n$, the statements are true.  We can show the following relation
			\[
				\chi(f_R(s)) + \sum_{ a \in R} \chi(a-s) \equiv 1 - |R| \pmod 4
			\]

			Combining with (\ref{eqn11}), we have 
			\[
				\sum_{ a \in \mathbb{F}_p} \chi(f_S(a)) \equiv   3 - \sum_{ a \in \mathbb{F}_p} \chi(f_R(a)) 
			\]

			If $|S|$ is odd, then $|R|$ is even.  By assumption $\sum_{a \in \mathbb{F}_p} \chi(f_R(a)) \equiv 3 \pmod 4$, and therefore 
			\[
				\sum_{ a \in \mathbb{F}_p} \chi(f_S(a)) \equiv 0 \pmod 4
			\]

			Similarly, if $|S|$ is even, then $|R|$ is odd, and we have 
			\[
				\sum_{ a \in \mathbb{F}_p} \chi(f_S(a)) \equiv 3 \pmod 4
			\]
	\end{enumerate}
\end{proof}

In Proposition \ref{prop3}, we notice that when $|S|$ is even, a codeword $c$ is associated with a curve $y^2 = f_S(x)$.  When $|S|$ is odd, $c$ is associated with a curve $y^2 = f_{S^c} (x)$ with $|S^c|$ even. 

Therefore the curves that are linked with QQR codes are in the form 
\[
	y^2 = f_S(x)
\]
where $|S|$ is even.  We denote this set of curves as $ \mathscr{C}_p$. 

Let $B_k$ be the number of curves in $\mathscr{C}_p$ that have $k$ affine points over $ \mathbb{F}_p$, and let $A_k$ be the number of codewords with weight $k$.

From Proposition \ref{prop-Joyner}, it's clear that
\[
	A_k = B_k + B_{2p-k}
\]

and from Proposition \ref{prop3}, we have the following relation between $A_k$ and $B_k$:

\begin{prop}

	Let $A_k$ and $B_k$ be described as above, then

	\begin{itemize}
		\item If $k$ is odd, $A_k = 0$.
		\item If $k \equiv 0 \pmod 4$, $A_k = B_{2p-k}$.
		\item If $k \equiv 2 \pmod 4$, $A_k = B_k$.
	\end{itemize}

	\label{prop5}

\end{prop}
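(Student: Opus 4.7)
The plan is to show Proposition \ref{prop5} is essentially a bookkeeping consequence of two ingredients already in hand: the weight formula of Proposition \ref{prop-Joyner} (restated in the bulleted form just before this proposition), and the congruence on $|X_S(\mathbb{F}_p)|$ established in Proposition \ref{prop3}. The starting point is the identity $A_k = B_k + B_{2p-k}$ noted immediately above the statement. I would first justify this identity carefully: since each subset $S\subseteq \mathbb{F}_p$ gives rise to exactly one codeword $c=(r_Qr_S,r_Nr_S)$ and these $2^p$ codewords exhaust the QQR code, one may split the count of weight-$k$ codewords according to the parity of $|S|$. Subsets with $|S|$ even and $|X_S(\mathbb{F}_p)|=2p-k$ contribute exactly $B_{2p-k}$ codewords, while odd-size $S$ contribute via the complementation bijection $S\mapsto S^c$ (which is a bijection between odd-size and even-size subsets because $p$ is odd), accounting for exactly $B_k$ additional codewords.

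Next, I would invoke Proposition \ref{prop3}: by definition $B_m$ only counts curves $y^2=f_S(x)$ with $|S|$ even, and for such $S$ we have $|X_S(\mathbb{F}_p)|\equiv 2\pmod 4$. Hence $B_m=0$ whenever $m\not\equiv 2\pmod 4$. This single observation drives the three cases once combined with the congruence class of $2p$ modulo $4$.

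Finally, recall that the standing hypothesis gives $p\equiv 3\pmod 4$, so $2p\equiv 2\pmod 4$. The three cases then drop out mechanically: if $k$ is odd, then both $k$ and $2p-k$ are odd, so $B_k=B_{2p-k}=0$ and $A_k=0$; if $k\equiv 0\pmod 4$, then $k\not\equiv 2\pmod 4$ so $B_k=0$, while $2p-k\equiv 2\pmod 4$, yielding $A_k=B_{2p-k}$; if $k\equiv 2\pmod 4$, then $2p-k\equiv 0\pmod 4$ so $B_{2p-k}=0$, yielding $A_k=B_k$.

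There is no real obstacle here; the content of the proposition is packaged inside Propositions \ref{prop-Joyner} and \ref{prop3}, and the only thing to watch is to verify that the identity $A_k=B_k+B_{2p-k}$ is a clean bijective count rather than a double count, which is why the parity bijection $S\mapsto S^c$ needs to be stated explicitly. Once that is in place, the proof is a one-line modular arithmetic check on each residue class of $k$ modulo $4$.
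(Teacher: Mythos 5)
Your proposal is correct and follows exactly the route the paper intends: the paper states the proposition as an immediate consequence of the identity $A_k = B_k + B_{2p-k}$ and Proposition \ref{prop3} without writing out the details, and you have simply filled in those details (the bijection $S \mapsto (r_Qr_S, r_Nr_S)$, the parity-swapping complementation $S \mapsto S^c$, the vanishing of $B_m$ for $m \not\equiv 2 \pmod 4$, and the observation $2p \equiv 2 \pmod 4$). Nothing is missing and no different idea is introduced.
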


Therefore we have obtained an explicit relation between the weight distribution of a QQR code associated with $p$ and the point distribution of the hyperelliptic curves in the set $\mathscr{C}_p$.

The following diagram illustrates this interlacing pattern for $p=11$.  $A_k$'s can be obtained from $B_k$'s by symmetrizing the distribution of $B_k$'s with respect to $k$.

\bigskip
\begin{center}
	\begin{tikzcd}[column sep = 1.5ex,row sep = 3ex]
		k & 0 & 2 & 4 & 6 & 8 & 10 & 12 & 14 & 16 & 18 & 20 & 22 \\
		B_k & 0 & 0 & 0 & 77 \ar[d,color=red,dashed] \ar[rrrrrd,dashed,color=red] & 0 & 616 \ar[d,thin,color=green,dashed] \ar[rd,thin,color=green,dashed] & 0 & 330 \ar[d,thin,color=gray,dashed] \ar[llld,color=gray,dashed,thin] & 0 & 0 & 0 & 1 \ar[d,thin,color=blue,dashed] \ar[llllllllllld,thin,color=blue,dashed] \\
		A_k & 1 & 0 & 0 & 77 & 330 & 616 & 616 & 330 & 77 & 0 & 0 & 1
	\end{tikzcd}
\end{center}

\bigskip
Next we will show a result on the point weight distribution of QQR codes.  But firstly we need to formally define the moments of QQR codes from the discrete values of $A_k$.  These are standard definitions and can be found in \cite{Macwilliams1977}.

\begin{defn}[Moments]
	For a code $C$ of length $n$, let $a_j = A_j/2^k$, where $A_j$'s are the coefficients of its weight polynomial.  The \emph{mean} and \emph{variance} of $C$ are defined by 
	\begin{align*}
		\mu &= \sum^{n}_{j=0} j a_j \\
		\sigma^2 &= \sum^{n}_{j=0} (\mu-j)^2 a_j
	\end{align*}	
	and the $r^{th}$ central moment is
	\[
		\mu_r = \sum^{n}_{j=0} (\frac{\mu-j}{\sigma})^r a_j
	\]
\end{defn}

\begin{defn}[Cumulative Distribution Function]
	The \emph{cumulative distribution function}(c.d.f.) $A(z)$ of a code $C$ is given by 
	\[
		A(z) = \sum^{n}_{j \ge \mu - \sigma z} a_j
	\]
\end{defn}

The following is a classical theorem on the weight distribution of codes.

\begin{thm}[Sidel'nikov \cite{Macwilliams1977}]
	Let $C$ be a binary code, and $d^\perp \ge 3$ the minimum distance of its dual code $C^\perp$, then
	\[
		|A(z) - \Phi(z)| \le \frac{20}{\sqrt{d^\perp}}
	\]
	where $\Phi(z)$ is the c.d.f. for the normal distribution.
	\label{thm5}
\end{thm}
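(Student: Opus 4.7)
The plan is to prove this via the method of moments, exploiting the MacWilliams identity and the vanishing of the low-degree Krawtchouk moments of $C$ to compare the weight distribution of $C$ against the binomial distribution $\operatorname{Bin}(n,1/2)$, which is the weight distribution of the full space $\mathbb{F}_2^n$ and satisfies a classical quantitative CLT.

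First, I would recast the problem in terms of centered, scaled weights. Let $X$ be the random variable with $\Pr(X=j)=a_j$, and let $Y\sim\operatorname{Bin}(n,1/2)$. The normalized random variable $Z=(n-2Y)/\sqrt{n}$ satisfies $|\Pr(Z\le z)-\Phi(z)|=O(1/\sqrt{n})$ by Berry--Esseen, with an explicit small constant. Thus it suffices to bound the Kolmogorov distance between the laws of $(n-2X)/\sqrt{n}$ and $Z$.

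Second, I would compute power moments using Krawtchouk polynomials. The MacWilliams identity, written in Krawtchouk form, asserts
\[
\sum_{j=0}^n a_j K_r(j;n) \;=\; A_r^{\perp}/|C^\perp|,
\]
which for $1\le r\le d^\perp-1$ equals $0$, and for $r=0$ equals $1$. Since $\{K_0,K_1,\ldots,K_r\}$ is a basis for the polynomials of degree $\le r$, one obtains $\mathbb{E}[(n-2X)^r]=\mathbb{E}[(n-2Y)^r]$ for every $r\le d^\perp-1$. Thus the distributions of $(n-2X)/\sqrt{n}$ and $Z$ agree in all moments up to order $d^\perp-1$.

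Third, I would invoke a quantitative moment-matching lemma: if two probability measures supported on a bounded interval share their first $m$ moments, their cumulative distribution functions differ by $O(1/\sqrt{m})$. This is proved by approximating the indicator of $(-\infty,z]$ by a polynomial of degree $\le m$ in $L^2$ (using Chebyshev expansions, whose tail decays like $1/\sqrt{m}$), then noting that the integrals of such polynomials against the two measures coincide. Applying this with $m=d^\perp-1$, and absorbing the Berry--Esseen contribution for $Z$ itself, delivers a bound of the form $C/\sqrt{d^\perp}$.

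The main obstacle is pinning down the explicit constant $20$. The moment-matching step requires a careful balance between truncating to a bounded interval of radius $\sim\sqrt{d^\perp}$ (outside which both $\Phi$ and the discrete distributions have exponentially small tails, so that truncation errors are negligible), and controlling the Chebyshev polynomial approximation on that interval. In addition, since $X$ is supported on integers, one must account for a discretization error of order $1/\sqrt{n}$, which is dominated by $1/\sqrt{d^\perp}$ whenever $d^\perp\le n$. Tracking all absolute constants through these three ingredients to reach $20$ rather than a larger number is the delicate part; the qualitative rate $O(1/\sqrt{d^\perp})$ falls out straightforwardly.
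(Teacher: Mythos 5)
The paper offers no proof of this statement: it is quoted as Sidel'nikov's theorem directly from MacWilliams--Sloane, so your proposal can only be judged against the classical argument in the literature. Your overall strategy is that classical argument, and your first two steps are sound: the MacWilliams identity in Krawtchouk form gives $\sum_j a_j K_r(j;n)=0$ for $1\le r\le d^\perp-1$, and since $\{K_0,\dots,K_r\}$ spans the polynomials of degree $\le r$, the power moments of $n-2X$ agree with those of $n-2Y$, $Y\sim\operatorname{Bin}(n,1/2)$, through order $d^\perp-1$ (in particular $d^\perp\ge 3$ forces $\mu=n/2$, $\sigma^2=n/4$, so the paper's normalization matches yours), and Berry--Esseen handles the binomial-versus-normal comparison.

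The genuine gap is in step 3. The lemma you invoke --- that two probability measures on a bounded interval sharing their first $m$ moments have c.d.f.'s within $O(1/\sqrt m)$ --- is false as stated. By the Chebyshev--Markov--Stieltjes theory, the c.d.f.'s of two measures sharing $2k$ moments can differ at a point $t$ by as much as the Christoffel function $\lambda_{k+1}(t)=\bigl(\sum_{i\le k}p_i(t)^2\bigr)^{-1}$, and for a moment sequence dominated by an atom (e.g.\ $(1-\varepsilon)\delta_0+\varepsilon\nu_0$) this stays near $1-\varepsilon$ for all $k$: the canonical representations with a node at $0$ and at a point just to its right share arbitrarily many moments yet have c.d.f.'s differing by nearly $1$ at $0$. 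Your proposed proof of the lemma also fails on its own terms: an $L^2$ Chebyshev-weight bound on $\mathbf{1}_{(-\infty,z]}-P$ controls $\int(\mathbf{1}-P)\,d\mu$ only if $\mu$ has an $L^2$ density with respect to that weight, which is exactly what you do not know about the code's weight measure near the jump. The standard repair is to use \emph{one-sided} polynomial approximants $P_-\le \mathbf{1}_{(-\infty,z]}\le P_+$ of degree at most $d^\perp-1$: the matched moments transfer $\int P_\pm\,d\mu=\int P_\pm\,d(\mathrm{binomial})$ exactly, and the achievable one-sided error $\int(P_+-P_-)\,d(\mathrm{binomial})$ is governed by the Christoffel function of the binomial/Gaussian weight, which is $\Theta(m^{-1/2})$ in the bulk. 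This is simultaneously why the proof works and why the exponent $1/2$ cannot be improved; it is the step where the specific reference measure, not mere boundedness of the support, must enter. With that substitution (your tail-truncation remark is fine, since the matched moments already give sub-Gaussian tails out to radius of order $\sqrt{d^\perp}$), the argument becomes the standard one, with the constant $20$ emerging from the explicit Christoffel-function and Berry--Esseen bounds.
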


In other words, if $d^\perp$ tends to infinity for a series of codes, then its weight distribution is asymptotically normal. 

\begin{figure}[htpb]
	\centering
	\includegraphics[width=0.8\linewidth]{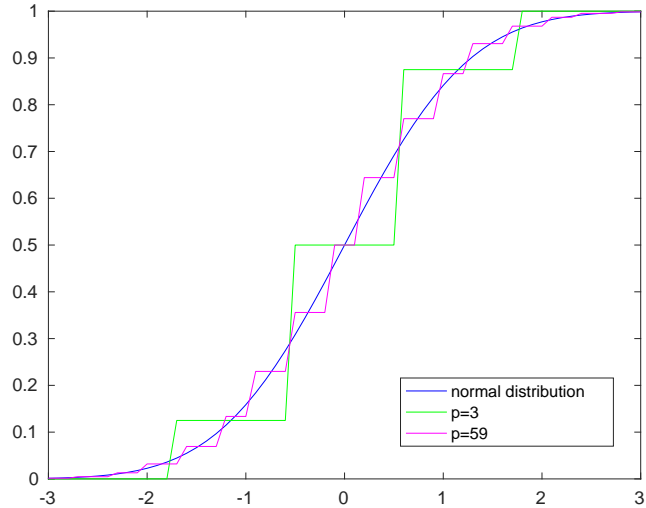}
	\caption{distribution comparison}
	\label{fig:distribution}
\end{figure}

In Proposition \ref{prop9}, we showed that when $p \equiv 7 \pmod 8$, the minimum distance of a QQR code is the minimum distance of its corresponding quadratic residue code plus 1.  Since the minimum distances of quadratic residue codes have a well-known lower bound of $\sqrt{p}$, the minimum distance of QQR codes are bounded below by $\sqrt{p} + 1$.

When $p \equiv 3 \pmod 8$, Helleseth and Voloch have proven the following bound for QQR codes.

\begin{thm}\cite{Helleseth2004}
	The minimum distance $d$ of a QQR code when $p \equiv 3 \pmod 8$ is bounded by 
	\[
		d \ge \frac{2(p+ \sqrt{p})}{\sqrt{p}+3}
	\]
\end{thm}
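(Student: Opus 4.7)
The approach combines the hyperelliptic-curve interpretation of codeword weights from Proposition \ref{prop-Joyner} with the Weil bound on character sums and the multiplicative structure of QQR codes peculiar to $p \equiv 3 \pmod 8$. Let $c = (c_1, c_2) = (r_Q r_S, r_N r_S)$ be a codeword of minimum weight $d$, and set $a = wt(c_1)$, $b = wt(c_2)$, so that $a + b = d$.

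First I would establish two elementary bounds on $|S|$. Proposition \ref{prop6} gives $r_Q r_N = 1$ in $\mathbb{F}_2[x]/(x^p-1)$ (since $r_Q^{-1} = r_N$), hence $c_1 c_2 = r_Q r_N r_S^2 = r_{2S}$ has weight exactly $|S|$; the standard inequality $wt(uv) \le wt(u) wt(v)$ combined with AM-GM gives $|S| \le ab \le d^2/4$. Independently, letting $j = \sum_{i=0}^{p-1} x^i$, one has $r_Q + r_N = j + 1$, so $c_1 + c_2 = (j + 1) r_S$ equals $r_S$ if $|S|$ is even and equals $j + r_S$ if $|S|$ is odd; the triangle inequality $wt(c_1 + c_2) \le wt(c_1) + wt(c_2) = d$ then forces either $|S| \le d$ (even case) or $|S| \ge p - d$ (odd case).

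Next I would invoke the Weil bound on character sums, $\left|\sum_{x \in \mathbb{F}_p} \chi(f_T(x))\right| \le (|T|-1)\sqrt{p}$, where $\chi$ is the Legendre symbol, with Serre's refinement accounting for the two points at infinity on $y^2 = f_T(x)$ when $|T|$ is even. Since $|X_T(\mathbb{F}_p)| = p + \sum_x \chi(f_T(x))$ and Proposition \ref{prop-Joyner} expresses $wt(c)$ in terms of $|X_S|$ or $|X_{S^c}|$ according to the parity of $|S|$, this produces a lower bound on $d$ which is a decreasing function of $|S|$ (respectively of $|S^c|$). Substituting the upper bound on $|S|$ from the previous step and optimising over the two parity cases yields a quadratic inequality in $d$ whose root is the claimed bound.

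The main obstacle is the algebraic bookkeeping needed to extract exactly the form $\frac{2(p+\sqrt{p})}{\sqrt{p}+3}$. Using only $|S| \le d$ in the even case recovers the weaker square-root bound $d \ge \sqrt{p}+1$, while using only $|S| \le d^2/4$ together with $|S| \ge p - d$ in the odd case gives roughly $d \ge 2\sqrt{p+1}-2$. Neither of these individually matches the Helleseth-Voloch form; obtaining the precise expression requires simultaneously balancing both constraints against the sharp Weil-Serre estimate and minimising the resulting lower bound over the two parities of $|S|$.
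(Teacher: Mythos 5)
The paper offers no proof of this statement---it is quoted verbatim from Helleseth and Voloch---so there is nothing internal to compare against; your proposal has to stand on its own, and it does not close. The gap is the one you yourself flag in your final paragraph, and it is not a matter of ``algebraic bookkeeping.'' A single application of the Weil bound to $y^2=f_S(x)$ gives $d\ge p-(|S|-1)\sqrt{p}$, and the best upper bound you have on $|S|$ in the even case is $|S|\le d$ (your $|S|\le ab\le d^2/4$ is weaker than this once $d\ge 4$, so it buys nothing there). Combining these gives $d(1+\sqrt{p})\ge p+\sqrt{p}$, i.e.\ $d\ge\sqrt{p}$ --- the classical square-root bound, a factor of $2$ short of $\frac{2(p+\sqrt p)}{\sqrt p+3}\approx 2\sqrt p$. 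Since the minimum distance is a minimum over \emph{all} codewords, the even-$|S|$ case is binding, and the fact that your odd-$|S|$ case happens to give $\approx 2\sqrt{p+1}-2$ does not rescue the theorem. No balancing of the constraints $|S|\le d$, $|S|\le d^2/4$, $|S|\ge p-d$ against one Weil estimate can manufacture the missing factor.

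The idea you are missing is that the Weil bound must be applied \emph{twice, symmetrically, to the two halves of the codeword}, which is exactly where the hypothesis $p\equiv 3\pmod 8$ enters. By Proposition \ref{prop6}, $r_Q^{-1}=r_N$, so a codeword $c=(u,v)$ satisfies both $v=r_Qu$ and $u=r_Nv$: each half is the quadratic-residue (resp.\ non-residue) convolution of the other. Writing $w_1=wt(u)$, $w_2=wt(v)$ and applying the Weil bound to the polynomial whose roots are the support of $u$, one finds that the positions $j$ outside $\mathrm{supp}(u)$ with $v_j=1$ are (up to a parity case split) the $j$ with $\chi\bigl(\prod_{i\in\mathrm{supp}(u)}(j-i)\bigr)=-1$, giving $w_2\ge\frac{p-w_1-(w_1-1)\sqrt p}{2}$; exchanging the roles of $u$ and $v$ gives $w_1\ge\frac{p-w_2-(w_2-1)\sqrt p}{2}$. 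Adding the two inequalities yields $d\ge p-\frac{d}{2}-\frac{(d-2)\sqrt p}{2}$, i.e.\ $d\cdot\frac{\sqrt p+3}{2}\ge p+\sqrt p$, which is precisely the stated bound. Note that in this argument the character sum enters with a coefficient $\frac12$ on each side (you are counting only the non-residue values, roughly half of $p-w_1$ positions), which is what turns the $\sqrt p$ of the one-sided argument into $2\sqrt p$; your version, which converts the full weight into a single point count $2p-|X_S|$, structurally cannot see this.
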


Combining these and Theorem \ref{thm5}, we have the following theorem.

\begin{thm}
	The weight distribution of QQR codes are asymptotically normal.
\end{thm}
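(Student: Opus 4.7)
The plan is to invoke Sidel'nikov's theorem (Theorem \ref{thm5}) directly, after observing that QQR codes are self-dual, so the minimum distance of the dual equals the minimum distance of the code itself. Thus, to show asymptotic normality of the weight distribution as $p \to \infty$, it suffices to verify that $d(C) \to \infty$ for the family of QQR codes indexed by primes $p \equiv 3 \pmod 4$.

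First, I would split into the two congruence cases. For $p \equiv 7 \pmod 8$, Proposition \ref{prop9} gives $d(C) = d(Q) + 1$, where $Q$ is the corresponding quadratic residue code. The classical square-root bound for quadratic residue codes gives $d(Q) \ge \sqrt{p}$, hence $d(C) \ge \sqrt{p} + 1$, which tends to infinity. For $p \equiv 3 \pmod 8$, I would apply the Helleseth--Voloch bound just quoted, which yields
\[
d(C) \ge \frac{2(p + \sqrt{p})}{\sqrt{p} + 3},
\]
an expression of order $\sqrt{p}$ for large $p$, and therefore also tends to infinity.

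With $d^\perp = d \to \infty$ established along the entire sequence of QQR codes, Sidel'nikov's bound $|A(z) - \Phi(z)| \le 20/\sqrt{d^\perp}$ gives $A(z) \to \Phi(z)$ uniformly in $z$, which is exactly the statement that the (normalized) weight distributions converge to the standard normal c.d.f. This is the asymptotic normality claimed.

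Since every ingredient is already stated in the excerpt, I do not anticipate a serious obstacle. The only minor point requiring care is making sure the self-duality of QQR codes (listed among the known properties) is invoked explicitly, so that the hypothesis $d^\perp \ge 3$ in Theorem \ref{thm5} is satisfied for all but the tiniest primes and $d^\perp \to \infty$ along the family; this is immediate from the two lower bounds above. The rest is a one-line application of Sidel'nikov.
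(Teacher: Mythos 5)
Your proposal is correct and follows the paper's argument exactly: self-duality gives $d^\perp = d$, the two congruence cases are handled by Proposition \ref{prop9} with the square-root bound and by the Helleseth--Voloch bound respectively, and Sidel'nikov's theorem (Theorem \ref{thm5}) finishes the proof. No substantive difference from the paper's reasoning.
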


Figure~\ref{fig:distribution} shows the comparison of the c.d.f among normal distribution, the QQR code with $p=3$, and the QQR code with $p=59$.  Both $p=3$ and $p=59$ are approximating the normal distribution.  Their c.d.f's are step functions by construction.  $p=59$ oscillates more frequently and more closely to the normal distribution.  We imagine that with $p$ bigger, the oscillation will become more frequent and closer to the normal distribution.    

Since proposition \ref{prop5} gives an explicit relation between the point distribution of hyperelliptic curves and weight distribution of QQR codes, we can get all the $A_k$'s by using $B_k$'s.  Namely, set

\[
	A_k = \begin{cases}
		B_k & k \equiv 2 \pmod 4 \\
		B_{2p-k} & k \equiv 0 \pmod 4 \\
		0 & \text{otherwise}
	\end{cases}
\]

Therefore we have shown that, after symmetrizing the point distribution of hyperelliptic curves in $ \mathscr{C}_p$, the result will converge to the normal distribution when $p \rightarrow \infty$.

A recent study by Larsen\cite{Larsen2008} showed that, more or less, for a random curve of random genus, over a random finite field $\mathbb{F}_q$, $T/\sqrt{q}$, is normally distributed, where $T$ is the number of points on the curve.  More precisely, 
\begin{itemize}
	\item Fix $g$.  As $q \rightarrow \infty$, $T/\sqrt{q}$ defines a measure $\mu_g$ on $[-2g,2g]$.  e.g. for $g=1$, $\mu_1$ is the Sato-Tate measure.
	\item The limit of these measures $\mu_g$ when $g \rightarrow \infty$ is the measure given by the standard normal distribution.
\end{itemize}

Our result is a variant of Larsen's.  The main differences are:
\begin{itemize}
	\item The set of curves in Larsen's result consist of all hyperelliptic curves defined over $ \mathbb{F}_q$ while ours is a subset of that given by the definition of $\mathscr{C}_p$.  
	\item We showed that after being symmetrized, the distribution approaches the standard normal distribution, while Larsen's result is on the point distribution itself.  
	\item Larsen's result uses theoretical results on hyperelliptic curves among others, while our result is simply a corollary from the study on QQR codes.
\end{itemize}

\section{Conclusion}
\label{sec:conclusionconclusion}

In this paper, we begin by reviewing some of the known properties of QQR codes, and proved that $PSL_2(p)$ acts on the extended QQR code when $p \equiv 3 \pmod 4$.  Using this discovery, we then showed their weight polynomials satisfy a strong divisibility condition, namely that they are divisible by $(x^2 + y^2)^{d-1}$, where $d$ is the corresponding minimum distance.  Using this result, we were able to construct an efficient algorithm to compute weight polynomials for QQR codes and correct errors in existing results on quadratic residue codes.

In the second half, we use the relation between the weight of codewords and the number of points on hyperelliptic curves to prove that the symmetrized distribution of a set of hyperelliptic curves is asymptotically normal.

\section{Acknowledgment}
\label{sec:acknowledgment}

The authors want to thank Prof. Iwan M Duursma for his keen observation on the relation between weight polynomials and shadows.

\bibliographystyle{AIMS}

\bibliography{Ref}

\end{document}